\theoremstyle{plain}
\newtheorem{theorem}{Theorem}
\newtheorem{Pro}[theorem]{Proposition}
\newtheorem{Lem}[theorem]{Lemma}
\newtheorem{Ex}{Example}
\theoremstyle{definition}
\theoremstyle{remark}
\def\Z{\mathbb{Z}}
\def\C{\mathbb{C}}
\def\R{\mathbb{R}}
\def\N{\mathbb{N}}
\def\k{\kappa}
\def\mx{\a_1}
\def\mn{\a_k}
\def\cT{\mathcal{T}}
\def\cS{{\mathcal S}}
\def\bbbn{{\mathbb N}}
\def\bbbz{{\mathbb Z}}
\def\i{{\rm i}}
\renewcommand{\leq}{\leqslant}
\renewcommand{\geq}{\geqslant}
\def\bbbc{{\mathbb C}}
\newcommand\cD{{\mathcal D}}
\def\a{\alpha}
\def\om{\omega}
\def\d{\partial}
\def\fieldk{\bbbc}
\newcommand\cB{{\mathcal B}}
\newcommand\cU{{\mathcal U}}
\def\cM{\mathcal{M}}
\def\cN{\mathcal{N}}
\def\fA{\mathfrak{A}}
\def\ff{\mathfrak{f}}
\def\cA{\mathcal{A}}
\def\cT{\mathcal{T}}
\def\cS{{\mathcal S}}
\def\cZ{\mathcal{Z}}
\def\fI{\mathfrak{I}}
\def\pI{\pi_{\fI_a}}
\def\pJ{\pi_{\fI_b}}
\def\cI{\mathcal{I}}
\begin{document}

\bibliographystyle{unsrt}
\title{Hamiltonians for the quantised Volterra hierarchy}
\author{Sylvain Carpentier $^\ddagger$, Alexander V. Mikhailov$^{\star}$ and
Jing Ping
Wang$ ^\dagger $
\\
$\ddagger$ QSMS, Seoul National University, South Korea,
sylcar@snu.ac.kr
\\
$\star$ School of Mathematics, University of Leeds, UK,
a.v.mikhailov@leeds.ac.uk\\
$\dagger$ School of Mathematics, Statistics \& Actuarial Science, University of
Kent, UK, J.Wang@kent.ac.uk
}
\date{}

\maketitle

\begin{abstract}
This paper builds upon our recent work, published  in {\em Lett
Math Phys, 112:94, 2022}, where we established that the integrable Volterra lattice on a free associative algebra and the whole hierarchy of its symmetries admits a quantisation dependent on a  parameter $\omega$. We also uncovered an intriguing aspect: all odd-degree symmetries of the hierarchy admits an alternative, non-deformation quantisation, resulting in a non-commutative  algebra for any choice of the quantisation parameter $\omega$. In this study, we demonstrate that each equation within the quantum Volterra hierarchy can be expressed in the Heisenberg form. We provide explicit expressions for all quantum Hamiltonians and establish their commutativity.  In the classical limit, these quantum Hamiltonians yield explicit expressions for the classical ones of the commutative Volterra hierarchy. Furthermore, we present Heisenberg equations and their Hamiltonians in the case of  non-deformation quantisation. Finally, we discuss commuting first integrals, central elements of the quantum algebra, and the integrability problem for periodic reductions of the Volterra lattice in the context of both quantisations.
\end{abstract}

\section{Introduction}

In this paper we develop further the quantisation theory of the Volterra 
lattice \cite{CMW}, based on the notion of quantisation ideals \cite{AvM20}. In 
our paper \cite{CMW} we proved that the non-Abelian Volterra lattice, together 
with the whole hierarchy of its commuting symmetries, admits a quantisation 
with quadratic commutation relations between dynamical variables that depends 
on a complex parameter $\omega$. The algebra generated by the dynamical 
variables becomes commutative in the specialisation $\omega=1$. This 
quantisation can be viewed as a finite deformation of a commutative algebra, a 
deformation that is consistent with all equations of the hierarchy. Slightly
abusing terminology, we shall call it {\em deformation quantisation}, although 
it does not use to any Poisson structure of the
Volterra lattice and the noncommutative multiplication is presented in an
explicit form in contrast to  the well known theory of deformation quantisation 
\cite{Flato, Kontsevich}. In addition, we also showed that all
odd-degree symmetries of the Volterra hierarchy admit a non-deformation 
quantisation whose multiplication law is noncommutative for any choice of the 
quantisation parameter $\omega$ \cite{CMW}. While the deformation quantisation 
for the Volterra lattice is known in the literature \cite{InKa}, the 
non-deformation quantisation appeared in \cite{AvM20} for the first time. In
physics, a quantum description of fermions can be regarded as non-deformation 
quantisation, since the ``classical'' limit of the fermion dynamical variables 
is represented by a $\Z_2$ graded (Grassmann) algebra with commutative and 
anti-commutative variables. The ``classical'' limit of the 
Volterra non-deformation quantum algebra is not commutative and is not graded. 
It is a new type of non-commutative associative algebras whose representation 
theory has not yet been developed.

Traditionally, commuting quantum integrals are obtained in the frame of
the quantum inverse scattering method using a Lax representation with
ultra-local $L$ operator \cite{InKa, volkov, babel}. In this
method, the quantum commuting operators, including the Hamiltonian of the 
system, can be obtained recursively from the logarithm of the trace of the
monodromy matrix (the transfer matrix) using it as a generating function.
The coefficients in the expansion of this generating function
commute thanks to the existence of a quantum $R$ matrix compatible with the ultra-local $L$ operators.
The goal of this paper is to present the Hamiltonian operators
in explicit form and show that they are formally self-adjoint,
commute with each other, and
yield Heisenberg equations for every member of the integrable hierarchy of the
commuting symmetries. We show it without making use of the quantum Lax structure or the
corresponding transfer matrix. We prove this result both for the conventional
and the non-deformation quantisations.

The notion of quantisation ideals for dynamical systems defined on free
algebras was proposed in
\cite{AvM20}. Let $\fA$ be a
free associative algebra with a finite or infinite number of
multiplicative generators. The dynamical system defines a derivation
$\partial_t:\fA\mapsto\fA$.
A quantisation is a canonical projection of the dynamical
system on $\fA$ to a system defined on a quotient algebra
$\fA_\fI=\fA\diagup\fI$ over a two-sided ideal
$\fI\subset\fA$ satisfying the following properties:
\begin{enumerate}
 \item[({\rm i})] the ideal $\fI$ is $\partial_t$--stable, that is, 
$\partial_t(\fI)\subset\fI$;
 \item[({\rm ii})] the quotient algebra $\fA_\fI$ admits an additive basis of 
normally ordered monomials.
\end{enumerate}
An ideal satisfying the above two conditions is called  a
{\em
quantisation ideal}, and $\fA_\fI$ is called a {\em quantum algebra}.

In \cite{CMW}, we applied this approach to the integrable
nonabelian Volterra lattice
\begin{equation}\label{vol}
\partial_{t_1 }(u_n)=  K^{(1)}(u_{n+1},u_n,u_{n-1}), \quad K^{(1)}=u_{n+1} u_n- 
u_n
u_{n-1},\qquad n\in\bbbz
\end{equation}
and its hierarchy of symmetries
\begin{equation}\label{hvol}
\partial_{t_\ell}(u_n)=K^{(\ell)}(u_{n+\ell},\ldots ,u_{n-\ell}),\qquad
\ell\in\bbbn,\quad  n\in\bbbz,
\end{equation}
where $K^{(\ell)}(u_{n+\ell},\ldots ,u_{n-\ell})$ are homogeneous polynomials 
of degree
$\ell+1$ (explicitly given in section \ref{sec21}).
The second member of
the hierarchy
\begin{equation}
\label{voltf2}
\partial_{t_2}(u_n)=K^{(2)}=u_{n+2}  u_{n+1}  u_n +u_{n+1}^2  u_n+  u_{n+1}  
u_n^2 -
u_n^2  u_{n-1}-u_n   u_{n-1}^2- u_n   u_{n-1}  u_{n-2}
\end{equation}
is a cubic polynomial and we refer to it as the cubic symmetry of (\ref{vol}). 
In this case the free algebra $\fA=\C[\omega]\langle u_n\ ;\ n\in\Z\rangle$ is 
generated by infinite number of non-commutative variables $u_n$.
We proved that the Volterra lattice (\ref{vol}) and its whole hierarchy of
symmetries admit  a 
quantisation with the quantization ideal
\begin{equation}\label{idi}
\fI_a= \langle \{ u_nu_{n+1}-\omega u_{n+1}u_n\,;\ n \in \mathbb{Z} \} \cup
\{u_nu_m-u_mu_n\,;\ |n-m| >1 ,\ n,m \in\bbbz\ \} \rangle ,
 \end{equation}
leading to the commutation relations
\begin{equation}\label{comm1}
  u_nu_{n+1}=\omega u_{n+1}u_n,\qquad u_nu_m=u_mu_n\ \
\mbox{if}\ \ |n-m|\geqslant 2,\quad n,m \in\bbbz
\end{equation}
in the quotient algebra $\fA_{\fI_a}$, where $\omega\in\C^*$ is a quantisation 
parameter. Moreover, we showed that
the cubic symmetry of the Volterra lattice, equation
(\ref{voltf2}), and  all odd degree members of the Volterra
hierarchy also admit a
non-deformation quantisation with the quantisation ideal 
\begin{equation}\label{idj}
\fI_b= \langle \{ u_nu_{n+1}-(-1)^n \omega u_{n+1}u_n\,;\,n \in \mathbb{Z} \}
\cup
\{u_nu_m+u_mu_n\,;\,|n-m| >1,\  n,m \in \mathbb{Z}\} \rangle\,
\end{equation}
and commutation relations
\begin{equation}\label{comm2}
 u_nu_{n+1}=(-1)^n \omega u_{n+1}u_n,\qquad
u_nu_m+u_mu_n=0\ \
\mbox{if}\ \ |n-m|\geqslant 2,\quad n,m \in\bbbz
\end{equation}
in the quotient algebra $\fA_{\fI_b}$.

In the quantum theory, real valued dynamical variables   are replaced
by self-adjoint operators, with respect to a Hermitian conjugation $\dagger$.
The ideals $\fI_a$ and $\fI_b$  and corresponding commutation relations are 
stable with respect to the Hermitian conjugation $\dagger$ (defined in Section 
\ref{sec3}), assuming 
the variables $u_n$ are self-adjoint and $\omega=e^{ \i\hbar}$. Here
$\hbar$ is an arbitrary real parameter, an analogue of the Plank constant, and
$\i=\sqrt{-1}$.
For the quantised equations of the Volterra hierarchy, we introduce the factors 
$e^{\frac12\i\ell\hbar}$ to make
 the right-hand side of the equations self-adjoint, that is,
\begin{equation}\label{qVh}
\d_{t_\ell}(u_n)=q^{\ell}K^{(\ell)}(u_{n+\ell},\ldots ,u_{n-\ell}),\qquad 
q=e^{\frac12\i\hbar}, \
\ell=1,2,\ldots,\ \ n\in\bbbz .
\end{equation}

In this paper we show that the infinite sequence of quantum Hamiltonians 
$H_\ell$ for the quantised Volterra hierarchy defined on the quantum algebra 
$\fA_{\fI_a}$ is given by
\[
 H_\ell=\sum_{k\in\bbbz}\ 
\sum_{\alpha\in\cN^\ell}\frac{\omega^\ell-1}{\omega^{\nu(\alpha,0)}-1}P_\alpha^{
\fI_a}(\omega)u_{\alpha+k},
\]
where the set 
$$\cN^\ell=\{\alpha=(\a_1, \a_2 \cdots, \a_{\ell-1}, 
0)\in\bbbz^\ell\Big{|} 
\a_i=\sum_{s=i}^{\ell-1} \theta_s, \ \theta_s\in\{0,1\}; i=1, \cdots,
\ell-1\},$$ 
and $\nu(\a, i)$ denotes
the number of $i$'s in the $\ell$-tuple $\a$. For $\a\in\cN^\ell$, the 
polynomials
$P_\alpha^{\fI_a}(\omega)$ are given by   products of Gaussian binomials
$$
P_\alpha^{\fI_a}(\omega)=\binom{\nu(\a,\a_1)+\nu(\a,\a_1-1)-1}
{\nu(\a,\a_1)}_\omega \cdots \binom{\nu(\a,1)+\nu(\a,0)-1}{\nu(\a,1)}_\omega\ .
$$
We   prove that the Hamiltonians $H_\ell$ are self-adjoint 
$H_\ell^\dagger=H_\ell$ and  commute with each other $[H_\ell,H_k]=0,\ 
k,\ell\in \N $. Furthermore, the 
dynamical equations of the quantum hierarchy (\ref{qVh}) can be written in the 
Heisenberg 
form \cite{dirac_book}:
\[ 
\partial_{t_\ell}(u_n)=\frac{\i}{2\sin\left(\frac12\ell 
\hbar\right)}[H_\ell,u_n] ,
\qquad n\in\bbbz,\ \ell\in\bbbn\, .
\]
In the classical limit $\hbar\to 0$ we obtain the Volterra hierarchy in the 
Hamiltonian form $\partial_{t_\ell}(u_n)=\{u_n,\tilde H_\ell\} $ and 
explicit expressions for all Hamiltonians $\tilde H_\ell=\lim\limits_{\hbar\to 
0}\ell^{-1}H_\ell$ (see Section \ref{discussion}). 

In the case of the non-deformation quantisation (\ref{idj})  we have also
found explicit expressions for self-adjoint commuting quantum 
Hamiltonians 
and present the quantum hierarchy with even times in the Heisenberg form.   
These results are stated in Theorem \ref{thm2}. 

The problem of quantisation of the Volterra lattice has a long history. In
1992, using the quantum version of the inverse spectral transform method, Volkov
proposed quantum commutation relations between the dynamical
variables \cite{volkov} (see also \cite{babel}). These commutation relations 
are 
hardly suitable for the derivation of the Heisenberg equations and the study of 
the corresponding quantum algebra structure. In the paper by Inoue and Hikami 
\cite{InKa},  the commutation relations (\ref{comm1}), as well as the first 
four Hamiltonians of the quantum Volterra hierarchy were found using ultra-local Lax
representation and the $R$--matrix technique.
Our alternative
approach does not rely on the existence of an ultra-local Lax representation,
$R$--matrix or Hamiltonian structures. It enables us to present   explicitly
 all quantum Hamiltonians for the Volterra hierarchy in the case of 
the deformation quantisation (\ref{comm1}). Moreover, we are able to find
explicitly the Hamiltonians and Heisenberg equations for the non-deformation
quantisation (\ref{comm2}) defined for all odd-degree members of the Volterra hierarchy.
Both results are new and rather surprising.

\section{The Nonabelian Volterra hierarchy and its quantisations}\label{sec2}

In this section, we derive the explicit expressions for the quantised Volterra 
hierarchy
under the quantisation ideal $\fI_a$ defined by \eqref{idi}, making use of 
Gaussian binomial coefficients.
When $\omega=1$, this also reduces to the hierarchy of symmetries for the 
classical (abelian) Volterra chain.
We first give a brief description of the nonabelian Volterra
hierarchy and introduce some basic notations required for this paper.

\subsection{The Nonabelian Volterra hierarchy}\label{sec21}
Let $\fA=\fieldk\langle u_n\,;\, n\in\bbbz\rangle$ be the free associative
algebra of polynomials generated by an infinite number of non-commuting 
variables $u_n$.
There is a natural automorphism $\cS\,:\,\fA\mapsto\fA$,
which we call the {\em shift operator}, defined by
\[
 \cS: a( u _k,\ldots , u _r)\mapsto a( u _{k+1},\ldots , u _{r+1}),\quad
\cS:\alpha\mapsto\alpha, \qquad a( u _k,\ldots , u _r)\in \fA,\ \
\alpha\in\fieldk.
\]
Thus $(\fA, \cS)$ is a difference algebra.
A derivation  $\cD  $ of the algebra  $\fA$ is a $\C$--linear map
satisfying   Leibniz's rule
\[\cD  (\alpha a+\beta  b)=\alpha\cD  (a)+\beta\cD  (b),\qquad  \cD  (a\cdot
b)=\cD (a)\cdot b+a\cdot\cD (b),\qquad  a,b\in\fA,\ \ \alpha,\beta\in\C.\]
It is uniquely defined by its action on the
generators
and $\cD  (\alpha)=0,\ \alpha\in\C$.

A derivation $\cD $ is called evolutionary if it commutes with the shift 
operator
$\cS$. An evolutionary derivation is completely characterised by its action on
the generator $u$ (we often write $u$ instead of $u_0$), that is,
\[
 \cD  (u)=a\quad \mbox{and} \quad \cD (u_k)=\cS^k (a),\qquad a\in \fA.
\]
We adopt the notation $\cD_a$ for the unique evolutionary derivation of $\fA$ 
such that $\cD_a(u)=a$. Evolutionary
derivations form a Lie subalgebra of the Lie algebra  of derivations of $\fA$, 
and the characteristic of a commutator $[\cD_a,\cD_b]=\cD_c$ is given by 
$c=\cD_a(b)-\cD_b(a)$. This expression induces a Lie bracket on the difference 
algebra $\fA$.

Assuming that the generators $u_k$ depend on $t\in\C$ we then identify the
evolutionary derivation $\cD_a$ with an infinite system of 
differential-difference
equations
\[
 \d_t(u_n)=\cD_a(u_n)=\cS^n(a).\qquad n\in\Z.
\]
From now on we will think of the system of evolutionary equations and the 
evolutionary derivation as the same object.

The Volterra lattice (\ref{vol}) defines an evolutionary derivation
$\d_{t_1}\,:\,\fA\mapsto\fA.$
The differential-difference system (\ref{voltf2}) defines another evolutionary
derivation $\d_{t_2}$.
Evolutionary derivations commuting with $\d_{t_1}$ are called (generalised) 
symmetries of the Volterra lattice. It can be straightforwardly verified that $[\d_{t_1},\d_{t_2}]=0$ and
thus equation (\ref{voltf2}) is a symmetry of the Volterra lattice.

It is well known that the Volterra lattice has an infinite hierarchy of
commuting symmetries.  They can be found using Lax representations both in
commutative \cite{Manakov74} and noncommutative \cite{Bog91} cases, or
using recursion operators \cite{wang12, cw19-2}. Remarkably, the  symmetries of 
the Volterra lattice (\ref{vol}) can be explicitly presented
in terms of a family of nonabelian homogeneous difference polynomials
\cite{cw19-2}, which was inspired by
the family of polynomials discovered in the commutative case (see 
\cite{svin09,svin11}).

Let us assume that the generators  $u_k$ of the free associative algebra $\fA$
depend on an infinite set of ``times'' $t_1,t_2, \ldots$.
It follows from
\cite{cw19-2} that the hierarchy of commuting symmetries of the nonabelian 
Volterra lattice
(\ref{vol}) can be written in the following explicit form
\begin{equation}\label{volh}
 \d_{t_\ell}(u)=\cS(X^{(\ell)})u-u\cS^{-1}(X^{(\ell)}),\qquad \ell\in\N\, ,
\end{equation}
where the (noncommutative) polynomials $X^{(\ell)}$ are given by
\begin{equation}\label{xl}
 X^{(\ell)}=
 \sum_{0\leq \lambda_{1}\leq \cdots \leq \lambda_{\ell}\leq \ell-1}
	\left(\prod_{j=1}^{\rightarrow{\ell}} u_{\lambda_{j}+1-j} \right).
\end{equation}
Here $\prod_{j=1}^{\rightarrow{\ell}}$   denotes the order of the values $j$,
from $1$ to $\ell$ in the product of the noncommutative generators
$u_{\lambda_{j}+1-j}$.
For example, we have $X^{(1)}=u$ and
\begin{eqnarray}
 &&\hspace{-1cm}X^{(2)}=u_1 u +u^2+u u_{-1};\label{x1}\\
 &&\hspace{-1cm} X^{(3)}=u_2 u_1 u+u_1^2 u + u u_1 u +u_1 u^2+u^3+ u u_{-1}
u+u_1 u u_{-1}+u^2 u_{-1}+uu_{-1}^2+u u_{-1} u_{-2};\label{x2}\\
&&\hspace{-1cm} X^{(4)}=u_3 u_2 u_1 u+u_2^2 u_1 u+u_2 u_1^2 u+u_1 u_2 u_1 u+u_2 
u_1 u^2+u u_2 u_1 u+u_2 u_1 u u_{-1}+u_1^2 u^2+u_1 u u_1 u\nonumber\\
&&+u u_1^2 u+u_1^3 u+u u_1 u^2+u_1 u^3+u^2 u_1 u+u_1^2 u u_{-1}+u^4+u u_1 u 
u_{-1}+u_1 u u_{-1} u+u_1 u^2 u_{-1}
\nonumber\\
&&+u u_{-1}^2 u+u u_{-1} u u_{-1}+u u_{-1} u^2+u u_{-1} u_1 u+u^2 u_{-1}^2+u^2 
u_{-1} u+u^3 u_{-1}+u_1 u u_{-1}^2+u u_{-1}^3\label{x3}
\\&&+u_1 u u_{-1} u_{-2}+u u_{-1} u_{-2} u+u^2 u_{-1} u_{-2}+u u_{-1}
u_{-2} u_{-1}+u u_{-1}^2 u_{-2}+u u_{-1} u_{-2}^2+u u_{-1} u_{-2} u_{-3} 
.\nonumber
\end{eqnarray}
Clearly, we get the Volterra equation (\ref{vol}) when $\ell=1$ and the system
(\ref{voltf2}) when $\ell=2$.

Let $\a=(\a_1, \a_2, \cdots, \a_k) \in \bbbz^k$ be a $k$-component vector. For
each $\a\in \bbbz^k$, we define the $k$-degree monomial $u_{\a} =u_{\a_1}
u_{\a_2}\cdots u_{\a_k}$.  We denote the degree of $\a$ by  $|\a|=k$. We say that a monomial $u_\a$ is normally ordered  if $\a_i>\a_{i+1}$ for all $1\leqslant i\leqslant k-1$.
Conventionally, we write $(\a_1+1,\a_2+1, \cdots, \a_k+1)$ as $\a+1$. Thus we
have
 $\cS^i u_{\a}=u_{\a+i}$ for $i\in\bbbz$. The multiplicity of $u_i$ in
the monomial $u_{\a}$ is denoted by $\nu(\a,i)$.
 Similarly, we denote by $\nu(\a,\geq i)$ the number of $k \geq i$ such that
$u_k$ appears in $u_{\a}$, counted with multiplicities.
 We say that two monomials $u_{\a}$ and $u_{\beta}$ are \textit{similar}
written as ${\a} \sim {\beta}$ if $\nu(\a,i)=\nu(\beta,i)$ for all $i\in\bbbz$.

We define two sets of distinguished monomials, namely, \textit{admissible} and
\textit{nonincreasing} monomials.  For $k \geq 1$, let
 \begin{eqnarray}
  && \cA^k=\left\{\a\in \bbbz^k\big{|} \,  1-j\leq \a_{j} \leq k-j, \,
j=1,\cdots, k;
\ \a_{i+1}+1 \geq \a_i,\ i=1,\cdots,k-1\right\};\label{seta}\\
  &&\cZ^k_{\geq}=\left\{\a\in \bbbz^k\big{|} \a_{i+1}+1 \geq \a_i\geq
\a_{i+1},\ i=1,...,k-1\right\}.\label{setz}
 \end{eqnarray}
A $k$-degree monomial $u_{\a}$ is admissible if $\alpha\in \cA^k$
and is nonincreasing if $\a\in \cZ^k_{\geq}$.

Using these notations, the expression $X^{(k)}$ given by
(\ref{xl}) can be written as
\begin{equation}\label{xk}
 X^{(k)}=\sum_{\a\in \cA^k} u_{\a} .
\end{equation}
In what follows, we use this form to present $X^{(k)}$ in normal ordering under
the quantisation ideals of the Volterra
hierarchy and to derive Hamiltonians for their quantised equations.

\subsection{The quantised Volterra hierarchies in normal ordering}\label{sec22}
Assume that $\fI\subset\fA$ is a two-sided ideal generated
by the infinite set of
polynomials $\ff_{i,j}$:
\begin{equation}\label{ideal00}
 \fI=\langle \ff_{i,j}\,;\, i<j,\ i,j\in\Z\rangle,\qquad
 \, \ff_{i,j}=u_i u_j-\omega_{i,j}u_j u_i,
\end{equation}
where $\omega_{i,j}\in\C^*$ are arbitrary non-zero complex parameters.
Specifying the nonzero constants $\omega_{i,j}$ leads to either $\fI_a$ defined 
by \eqref{idi}
or $\fI_b$ defined by \eqref{idj}.

Given such an ideal $\fI$, we denote the projection on the quotient algebra $\fA_{\fI}$ by
$\pi_{\fI}: \fA \rightarrow \fA_{\fI}$.
The algebra $\fA_{\fI}$ has an additive basis of  {\em standard
normally ordered monomials}
\[
 u_{i_1}u_{i_2}\cdots u_{i_n}\, ;\qquad i_1\geqslant
i_2\geqslant\cdots\geqslant i_n,\ i_k\in\Z,\ n\in\N .
\]
The canonical projection
$\pi_{\fI}: \fA \rightarrow \fA_{\fI} $ acts on the polynomial $X^{(k)}$ given 
by \eqref{xk} as
follows:
\begin{eqnarray}\label{ixk}
 \pi_{\fI} (X^{(k)})=\sum_{\a\in \cA^k\cap \cZ^k_{\geq}} P^{\fI}_{\a}(\omega)
u_{\a},
\end{eqnarray}
where $P^{\fI}_{\a}(\omega)$ is the unique polynomial in $\mathbb{Z}[\omega]$
such that for $\a\in  \cA^k\cap \cZ^k_{\geq}$,
 \begin{equation}\label{pa}
 P^{\fI}_{\a}(\omega)u_{\a} =\pi_{\fI}  \left( \sum_{\beta\in \cA^k, \beta \sim
\a} u_{\beta}\right).
 \end{equation}
We often write it as $P_{\a}(\omega)$ if there is no ambiguity regarding the 
choice of ideal.

We now study the polynomials $P_{\a}(\omega)$ for the
quantisation ideals $\fI_a$ \eqref{idi} and $\fI_b$ \eqref{idj}.
For example, we have
\begin{eqnarray} \label{paa}
&& \pI(X^{(1)})=X^{(1)}=u; \qquad \pI(X^{(2)})=X^{(2)}=u_1 u+u^2+u u_{-1};\\
&&\pI(X^{(3)})=u_2 u_1 u+u_1^2 u +(1+ \om) u_1 u^2+u^3+(1+ \om) u^2 u_{-1} +u_1
u u_{-1}+uu_{-1}^2+u u_{-1} u_{-2}; \nonumber\\
&&\pI(X^{(4)})=u_3 u_2 u_1 u+u_2^2 u_1 u+u_1^3 u
+u^4+u_1^2 u u_{-1}+u_2 u_1 u u_{-1}+u_1 u u_{-1}^2+u_1 u u_{-1}
u_{-2}\nonumber\\
&&\qquad
+u u_{-1}^3+u u_{-1} u_{-2}^2+u u_{-1} u_{-2} u_{-3}
+(1+\om) \left(u_2 u_1^2 u+ u_2 u_1 u^2+u^2 u_{-1} u_{-2}+u u_{-1}^2
u_{-2}\right)\nonumber\\
&&\qquad +(1+\om)^2 u_1 u^2 u_{-1}
+(1+\om+\om^2)\left(u_1^2 u^2 +u_1 u^3+u^2 u_{-1}^2+u^3 u_{-1}\right)
\nonumber
\end{eqnarray}
and
\begin{eqnarray}\label{pab}
&& \pJ(X^{(1)})=X^{(1)}=u; \qquad \pJ(X^{(2)})=X^{(2)}=u_1 u+u^2+u u_{-1};\\
&&\pJ(X^{(3)})=u_2 u_1 u+u_1^2 u +(1+\om) u_1 u^2+u^3+(1- \om) u^2 u_{-1} +u_1
u u_{-1}+uu_{-1}^2+u u_{-1} u_{-2}; \nonumber\\
&&\pJ(X^{(4)})=u_3 u_2 u_1 u+u_2^2 u_1 u+u_2 u_1 u u_{-1}+u_1^3 u+u_1^2 u 
u_{-1}+u^4+u_1 u u_{-1}^2
+u_1 u u_{-1} u_{-2}
\nonumber\\&&\qquad +u u_{-1}^3+u u_{-1} u_{-2}^2+u u_{-1} u_{-2} u_{-3}
+(1-\om) \left(u_2 u_1^2 u+u_2 u_1 u^2\right)+(1+\om^2) u_1 u^2 u_{-1}
\nonumber\\&&\qquad
+(1+\om) \left(u^2 u_{-1} u_{-2}+u u_{-1}^2 u_{-2}\right)
+(1+\om+\om^2) \left( u_1^2 u^2+ u_1 u^3\right)
\nonumber\\&&\qquad+(1-\om+\om^2) \left(u^2 u_{-1}^2+u^3 u_{-1} 
\right).\nonumber
\end{eqnarray}
This defines the polynomials $P_{\a}(\om)$ for all $\alpha$ that are
admissible, nonincreasing and of degree $1$ to $4$. For example, 
$P_{(0,0,-1)}^{\fI_a}(\om)=1+\om$ and $P_{(0,0,-1)}^{\fI_b}(\om)=1-\om$.

For the quantisation ideal $\fI_a$, these polynomials can be computed explicitly 
using the
Gaussian binomial coefficients:
$$
\binom{m}{r}_{\om}=\frac{(1-\om^m)(1-\om^{m-1}) \cdots (1-\om^{m-r+1})}{(1-\om) 
(1-\om^2) \cdots (1-\om^r)},
$$
where $m$ and $r$ are non-negative integers.  If $r > m$, this equals zero. 
When 
$r = 0$, its value is $1$.
\begin{Pro}\label{prop1}
For $\a=(\a_1, \cdots, \a_k) \in \cA^k\cap \cZ^k_{\geq}$, let $\k_i=\nu(\a,i)$, 
where $\mn\leq i\leq \mx$. Then
\begin{eqnarray}
&&P_{\a}^{\fI_a}(\om) = \binom{\k_{\mx}+\k_{\mx-1}-1}{\k_{\mx}}_{\omega}...  
\binom{\k_2+\k_1-1}{\k_2}_{\omega}  \binom{\k_1+\k_0-1}{\k_1}_{\omega}
\nonumber\\&&\qquad \qquad
\binom{\k_0+\k_{-1}-1}{\k_{-1}}_{\omega} \cdots 
\binom{\k_{\mn+1}+\k_{\mn}-1}{\k_{\mn}}_{\omega}. \label{binom}
\end{eqnarray}
\end{Pro}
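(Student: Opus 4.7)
My plan is to reinterpret $P_\a^{\fI_a}(\omega)$ combinatorially and exploit the structure of the admissibility constraints to extract a product of Gaussian binomials. First, the commutation relations \eqref{comm1} imply that for any rearrangement $\beta \sim \a$ the identity $u_\beta = \omega^{N(\beta)} u_\a$ holds in $\fA_{\fI_a}$, where
\[
N(\beta) := \#\{(p,q) : 1\leq p<q\leq k,\ \beta_q-\beta_p=1\}.
\]
Indeed, sorting $u_\beta$ into normal order by adjacent transpositions picks up a factor of $\omega$ exactly for each swap of the form $u_iu_{i+1} \to \omega u_{i+1}u_i$, while non-adjacent transpositions are $\omega$-trivial. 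Substituting into \eqref{pa} gives
\[
P_\a^{\fI_a}(\omega) = \sum_{\beta \in \cA^k,\,\beta\sim\a} \omega^{N(\beta)}, \qquad N(\beta) = \sum_{i=\mn+1}^{\mx} N_i(\beta),
\]
with $N_i(\beta):=\#\{(p,q):p<q,\ \beta_p=i-1,\ \beta_q=i\}$ depending only on the relative order of $u_i$'s and $u_{i-1}$'s.

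The crux of the argument is an \emph{anchor lemma}: in every admissible $\beta\sim\a$, for every $1\leq i\leq \mx$ at least one copy of $u_{i-1}$ appears after all copies of $u_i$, and for every $\mn+1\leq i\leq 0$ at least one copy of $u_i$ appears before all copies of $u_{i-1}$. I would prove the first statement by contradiction: if no $u_{i-1}$ followed the last $u_i$ (at position $p$), then the bound $\beta_{j+1}\geq \beta_j - 1$ together with the exclusion of $u_i,u_{i-1}$ from all positions beyond $p$ would force $\beta_k\geq i+1\geq 2$, contradicting $\beta_k\leq 0$; the second case is symmetric, using $\beta_1\geq 0$. Fixing such an anchor copy in each adjacent pair, the remaining elements, $\k_i$ copies of $u_i$ plus $\k_{i-1}-1$ copies of $u_{i-1}$ for $i\geq 1$ (respectively $\k_i-1$ and $\k_{i-1}$ for $i\leq 0$), can be freely interleaved, and the inversion-weighted generating function of such interleavings is precisely $\binom{\k_i+\k_{i-1}-1}{\k_i}_\omega$ (respectively $\binom{\k_i+\k_{i-1}-1}{\k_{i-1}}_\omega$), matching the asymmetry visible in the stated formula.

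The main obstacle is to prove that the interleaving data for the different adjacent pairs $(i,i-1)$ combine independently, so that summing over admissible $\beta$ yields the full product. I plan to resolve this by establishing a bijection between admissible $\beta\sim\a$ and tuples of interleavings satisfying the anchor condition, reconstructing $\beta$ from such a tuple by an explicit layer-by-layer insertion algorithm starting from the innermost blocks (near index $0$) and working outward. An alternative route is to induct on $k$: splitting the sum over admissible $\beta$ according to the position of the topmost anchor yields a recursion that matches the Pascal identity $\binom{m}{r}_\omega = \binom{m-1}{r-1}_\omega + \omega^r\binom{m-1}{r}_\omega$ for Gaussian binomials, closing the induction. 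Either route localises the difficulty to a bookkeeping check that the admissibility constraint distributes cleanly over adjacent pairs.
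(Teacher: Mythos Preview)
Your approach is essentially the same as the paper's: both reduce the computation to independent anchored interleavings of each adjacent pair $(u_i,u_{i-1})$ and then identify the local generating function with a Gaussian binomial. The paper asserts the decomposition into data (i) and (ii) with the anchor condition (``ending on the right by $u_i$'' / ``starting on the left by $u_i$'') in one line and then derives $\binom{n+m-1}{n}_\omega$ by exactly the first-letter recursion you sketch as your alternative route; your anchor lemma and proposed reconstruction algorithm are a more explicit justification of precisely the step the paper leaves terse.
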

\begin{proof}
An admissible monomial similar to $\a$ is equivalent to the following data:
\begin{enumerate}
    \item [$(\rm i).$] for each integer $i$ such that $0 \leq i \leq \a_1-1$, a
monomial similar to $u_{i+1}^{\k_{i+1}}u_i^{\k_i}$ ending on the right by $u_i$,
    \item[$(\rm ii).$] for each integer $i$ such that $0 \geq i \geq \a_k+1$, a
monomial similar to $u_{i}^{\k_{i}}u_{i-1}^{\k_{i-1}}$ starting on the left by 
$u_i$.
\end{enumerate}
This is true since we have $u_nu_m=u_mu_n$ for $|n-m| >1$ in the quantised 
algebra $\fA_{\fI_a}$. We now need to compute the sums of monomials in $(\rm i)$ for
fixed $i$. Let us denote by $(n|m)$ the monomial $u_{i+1}^nu_i^m$ and by 
$Q_{(n|m)}(\omega)$ the coefficient in front of $u_{i+1}^nu_i^m$ when summing 
all monomials in $(\rm i)$. If $m=1$ then we have $Q_{(n|1)}(\omega)=1$ since the
only monomial similar to $(n|1)$ and ending by $u_i$ is itself. We also have 
$Q_{(0|m)}(\omega)=1$. Otherwise, such admissible monomials start either with 
$u_i$ or $u_{i+1}$ that gives the induction formula
$Q_{(n|m)}=Q_{(n-1|m)}+ \omega^n Q_{(n|m-1)}.$ This can be integrated into
$$Q_{(n|m)}(\omega)=\binom{n+m-1}{n}_{\omega}.$$
Using a mirror argument, one sees that the sum of all monomials in $(\rm ii)$ is
equal 
to
$$\binom{\k_i+\k_{i-1}-1}{\k_{i-1}}_{\omega} u_{i}^{\k_{i}}u_{i-1}^{\k_{i-1}} 
,$$
which concludes the proof.
\end{proof}
It follows from this proposition that
\begin{eqnarray}
&&P_{\a}^{\fI_a}(\om)+ \om^{\nu(\a,0)}P_{\a-1}^{\fI_a}(\om)
=P_{\a-1}^{\fI_a}(\om)+ \om^{\nu(\a,1)}
P_{\a}^{\fI_a}(\om), \quad \a\in \cZ^k_{\geq},\label{eqp}
\end{eqnarray}
which was alternatively proved based on combinatoric counting  in \cite{CMW} 
when we showed
that the ideal $\fI_a$ defined by \eqref{idi} is
preserved by the symmetry flows  \eqref{volh}, for all $\ell\in \bbbn$.

Using Proposition \ref{prop1}, one can directly compute the canonical 
projections under
the quantisation ideal $\fI_a$
without first writing down $X^{(l)}$. For example,
$$
P_{(1,0,0,-1)}^{\fI_a}(\om)=\binom{2}{1}_\om \binom{2}{1}_\om=(1+\om)^2,
$$
which is the coefficient of $u_1 u^2 u_{-1}$ in $\pI(X^{(4)})$ as shown in 
(\ref{paa}).

For the quantisation ideal $\fI_b$, we have not been able to obtain such neat 
formula since
an admissible monomial is not the product of canonical projections of monomials 
$u_i^{\k_i}
u_{i-1}^{\k_{i-1}}$ due
to the relation $u_nu_m+u_mu_n=0$ for $|n-m|>1$ in the quantum algebra 
$\fA_{\fI_b}$. In \cite{CMW}, we proved the following important identity:
\begin{eqnarray}
\hspace{-0.6cm} P_{\a}^{\fI_b}(\om)+ (-1)^{\nu(\a,\geq
0)}\om^{\nu(\a,0)}P_{\a-1}^{\fI_b}(-\om)
=P_{\a-1}^{\fI_b}(-\om)+ (-1)^{\nu(\a,\geq 2)}\om^{\nu(\a,1)} 
P_{\a}^{\fI_b}(\om),\ \  \a\in \cZ^{2k}_{\geq} . \label{eqp2}
\end{eqnarray}

In order to write down the quantised equations in normal ordering (\ref{ixk}), 
we investigate the set.
$\cA^k\cap \cZ^k_{\geq}$.
We define a subset of of $\cA^k\cap \cZ^k_{\geq}$ (c.f. (\ref{seta}) and
(\ref{setz})), denoted by $\cN^k$:
\begin{equation}\label{setn}
 \cN^k=\left\{\a\in \cZ^k_{\geq} \cap \cA^k\big{|} \a_k=0 \right\},\qquad k\in 
\N,
\end{equation}
which is useful to write explicitly the Hamiltonians for the quantised Volterra 
hierarchy next section.
For any fixed $k\in \N$, all its elements can be constructed following the same
manner of Pascal's triangle. For example,
 $\cN^1=\{(0)\}$, $\cN^2=\{(1,0), (0,0)\}$ and
\begin{eqnarray}
&&\hspace{-1cm} \cN^3=\{(2,1,0), (1,1,0), (1,0,0), (0,0,0)\};\label{n3}\\
&& \hspace{-1cm} \cN^4=\{(3,2,1,0), (2,2,1,0), (2,1,1,0), (2,1,0,0),
(1,1,1,0), (1,1,0,0), (1,0,0,0),(0,0,0,0)\}.\label{ex4a}
\end{eqnarray}
In fact, the set $\cN^k$ is in bijection with the set
\[
 \cU^k=\left\{(\theta_1,\ldots,\theta_k) \big{|} \theta_k=0,\ 
\theta_s\in\{0,1\},\ s=1, \cdots, k-1
\right\}.
\]
Elements of   $\cU^k$ are sequences of zeros and ones of length $k$ with the 
last element $\theta_k=0$. The set $\cU^k$ has $2^{k-1}$ elements. The 
bijection 
with $\cN^k$ is given by the invertible linear transformation
\begin{eqnarray}\label{bij}
(\a_1,\ldots \a_k)=(\theta_1,\ldots,\theta_k)\cB,\quad
\cB_{ij}=\left\{\begin{array}{ll}
0 &\mbox{if }\ i<j\\
1  &\mbox{if }\ i\geqslant j
\end{array}\right. ,
\end{eqnarray}
or simply $\a_m=\sum\limits_{n=m}^k\theta_n$. Thus, we can rewrite the set 
$\cN^k$ given by \eqref{setn} as
\begin{eqnarray}\label{setnn}
\cN^k=\{\alpha=(\a_1, \a_2 \cdots, \a_{k-1}, 0)\in\bbbz^k\Big{|}\ 
\a_i=\sum_{s=i}^{k-1} \theta_s, \ \theta_s\in\{0,1\}; i=1, \cdots,
k-1\}.
\end{eqnarray}
\begin{Pro}\label{prop2}
The cardinality of set $\cN^{k+1}$ is $2^{k}$, and set
$\cA^{k+1} \cap \cZ^{k+1}_{\geq} $ has a cardinality of $(k+2) 2^{k-1} $, $0\leq
k\in \Z$.
\end{Pro}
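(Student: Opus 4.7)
The first claim is already essentially proved in the excerpt: the parametrisation \eqref{setnn} realises $\cN^{k+1}$ as the image of the bijection with $\cU^{k+1}$, which has $2^{k}$ elements because $\theta_{k+1}$ is forced to be $0$ and each of $\theta_1,\ldots,\theta_k$ is a free binary choice. I would just record this.

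For the second claim, the plan is to show that inside $\cZ^{k+1}_{\geq}$, the defining inequalities $1-j\leq\a_j\leq k+1-j$ of $\cA^{k+1}$ collapse to the two \emph{boundary} inequalities $\a_1\geq 0$ and $\a_{k+1}\leq 0$. Indeed, writing $\theta_i=\a_i-\a_{i+1}\in\{0,1\}$, one gets $\a_j=\a_1-\sum_{i<j}\theta_i$ and $\a_j=\a_{k+1}+\sum_{i\geq j}\theta_i$. From $\a_1\geq 0$ and the first identity, $\a_j\geq \a_1-(j-1)\geq 1-j$; from $\a_{k+1}\leq 0$ and the second, $\a_j\leq \a_{k+1}+(k+1-j)\leq k+1-j$. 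Conversely, the bounds at $j=1$ and $j=k+1$ are forced by the admissibility conditions themselves. So $\cA^{k+1}\cap\cZ^{k+1}_{\geq}$ is exactly the set of $\a=(\a_1,\ldots,\a_{k+1})$ such that each $\theta_i\in\{0,1\}$, $\a_1\geq 0$, and $\a_{k+1}\leq 0$.

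With this characterisation, the counting is immediate. For each fixed $\theta=(\theta_1,\ldots,\theta_k)\in\{0,1\}^k$, the admissible values of $\a_1$ are integers in $\bigl\{0,1,\ldots,\sum_i\theta_i\bigr\}$ (the upper bound being equivalent to $\a_{k+1}\leq 0$), giving $1+\sum_i\theta_i$ sequences. Summing over $\theta$, I expect
\[
 \bigl|\cA^{k+1}\cap\cZ^{k+1}_{\geq}\bigr|
 =\sum_{\theta\in\{0,1\}^k}\Bigl(1+\sum_{i=1}^k\theta_i\Bigr)
 =2^k+k\cdot 2^{k-1}=(k+2)\,2^{k-1},
\]
which is the desired formula, and the small-$k$ sanity checks ($k=0,1,2$ giving $1,3,8$ elements, with the $k=2$ case matching \eqref{n3} together with the three sequences having $\a_3=-1$ and the single sequence $(0,-1,-2)$) confirm the count.

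The only nontrivial step is the collapse of the admissibility constraints to their extremes $j=1$ and $j=k+1$; the rest is bookkeeping and a standard binomial identity. I do not anticipate an obstacle beyond carefully verifying that the chain of implications using $\theta_i\in\{0,1\}$ does indeed propagate the boundary inequalities to all intermediate $j$.
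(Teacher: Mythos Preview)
Your proposal is correct and follows essentially the same approach as the paper: both identify $\cA^{k+1}\cap\cZ^{k+1}_{\geq}$ with the shift-orbits of $\cN^{k+1}$, each $\a\in\cN^{k+1}$ with $\a_1=j$ contributing $j+1$ elements, and then sum. Your contribution is to make explicit, via the $\theta$-parametrisation, the collapse of the admissibility inequalities to the two endpoint conditions $\a_1\geq 0$ and $\a_{k+1}\leq 0$---a step the paper uses but does not spell out---and you sum directly over $\theta\in\{0,1\}^k$ rather than grouping by $j=\sum_i\theta_i$ and invoking $\sum_j(j+1)\binom{k}{j}$.
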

\begin{proof} Due to the bijection \eqref{bij}, the first part of the statement 
is obvious. To prove the second half, we
define a subset of $\cN^{k+1}$ as $\cN_{(j)}^{k+1}=\left\{\a\in 
\cN^{k+1}\big{|} 
0\leq \a_1=j\leq k  \right\}$,
whose cardinal number is $\binom{k}{j}$.
Note that $\cN^{k+1}=\cup_{j=0}^{k} \cN_{(j)}^{k+1}$ and there is no 
intersection among any subsets with different $j$.
The set $\cA^{k+1} \cap \cZ^{k+1}_{\geq} $ can be obtained from the subset
$\cN_{(j)}^{k+1}$: for any $\a\in \cN_{(j)}^{k+1}$, we can generate $j$ more 
distinct elements in the set,
namely, $\cS^{-l} \a\in \cA^{k+1} \cap \cZ^{k+1}_{\geq} \setminus \cN^{k+1}$ for
$1\leq l\leq \a_1$.
Thus its cardinality is
$$
\sum_{j=0}^{k} (j+1) \binom{k}{j}=\sum_{j=0}^{k} \binom{k}{j}+\sum_{j=1}^{k} j 
\binom{k}{j}=2^k+k 2^{k-1}=(k+2) 2^{k-1}
$$
as stated in the proposition.
\end{proof}
Combining Proposition \ref{prop1} and the construction of the set 
$\cA^{k+1} \cap \cZ^{k+1}_{\geq}$ described in Proposition \ref{prop2}, we are
able to explicitly write down
the expressions of $X^{(k)}$ in the quantum algebras:
\begin{theorem}\label{thqa}
Let $\fI$ be either $\fI_a$ or $\fI_b$. Then
\begin{eqnarray}\label{xka}
\pi_{\fI} (X^{(k)})=\sum_{\a\in \cN^k}\sum_{j=0}^{\a_1} 
P^{\fI}_{\a-\a_1+j}(\omega)
u_{\a-\a_1+j}.
\end{eqnarray}
\end{theorem}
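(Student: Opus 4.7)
The plan is to derive the theorem by combining the normal-ordering formula (\ref{ixk}) with a purely combinatorial repartitioning of the index set $\cA^k\cap\cZ^k_{\geq}$ around the distinguished subset $\cN^k$. Since (\ref{ixk}) already gives $\pi_\fI(X^{(k)})=\sum_{\b\in\cA^k\cap\cZ^k_{\geq}}P^\fI_\b(\omega)u_\b$ for either $\fI=\fI_a$ or $\fI=\fI_b$, the only remaining task is to reindex this sum so that the outer variable runs over $\cN^k$ and the inner variable records a shift from an element of $\cN^k$. This reindexing is precisely the one implicit in the counting argument used in Proposition \ref{prop2}.

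Concretely, I would construct the bijection
$$\Phi:\ \cA^k\cap\cZ^k_{\geq}\ \longrightarrow\ \{(\a,j)\,:\,\a\in\cN^k,\ 0\leq j\leq\a_1\},\qquad \b\mapsto(\b-\b_k,\ \b_1),$$
whose inverse sends $(\a,j)$ to $\a-\a_1+j$. Well-definedness in the forward direction rests on three checks: (i) the global shift $\a=\b-\b_k$ preserves the $\cZ^k_{\geq}$ condition and satisfies $\a_k=0$; (ii) iterating $\a_j\leq\a_{j+1}+1$ from $\a_k=0$ gives $\a_j\leq k-j$, while $\a_j\geq 0\geq 1-j$, so $\a\in\cN^k$; (iii) the range $0\leq j\leq\a_1$ holds because $\b\in\cA^k$ forces $\b_1\geq 0$ and $\a_1=\b_1-\b_k\geq\b_1=j$. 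For the reverse direction, setting $\b=\a-\a_1+j$ one checks $1-i\leq\b_i\leq k-i$ using $\a_i\leq k-i$ together with $\a_1\leq\a_i+(i-1)$ (a consequence of $\cZ^k_{\geq}$) and $0\leq j\leq\a_1\leq k-1$; the $\cZ^k_{\geq}$ condition is again preserved. A direct verification that the two maps are mutually inverse closes the argument.

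Using $\Phi$ to regroup the sum in (\ref{ixk}) directly yields
$$\pi_\fI(X^{(k)})=\sum_{\a\in\cN^k}\sum_{j=0}^{\a_1}P^\fI_{\a-\a_1+j}(\omega)\,u_{\a-\a_1+j},$$
which is the desired formula. The argument is valid for both quantisation ideals because the polynomial coefficients $P^\fI_\b(\omega)$ are simply being relabelled; no identity among them, and no distinction between $\fI_a$ and $\fI_b$, enters the combinatorics. Accordingly, I do not anticipate any genuine obstacle: the only mildly delicate point is the inequality bookkeeping in the bijection check, and every inequality needed follows immediately from the definitions of $\cN^k$, $\cA^k$, and $\cZ^k_{\geq}$.
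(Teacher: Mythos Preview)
Your proposal is correct and follows the same approach as the paper: the theorem is stated there as an immediate consequence of the construction in Proposition~\ref{prop2}, which says precisely that every element of $\cA^k\cap\cZ^k_{\geq}$ is obtained from a unique $\a\in\cN^k$ by a shift $\cS^{-l}$ with $0\leq l\leq\a_1$. You have simply made this bijection explicit and verified it carefully, whereas the paper leaves it implicit; your observation that Proposition~\ref{prop1} plays no role in the reindexing (only in computing the coefficients afterwards) is also accurate.
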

Using this theorem and \eqref{volh}, we can explicitly write down the quantum 
Volterra hierarchy. In the quantum algebra $\fA_{\fI_a}$,
for $k\in\N$, we have
\begin{eqnarray}
&&\d_{t_k}(u)=\sum_{\a\in \cN^k}\sum_{j=0}^{\a_1} P^{\fI_a}_{\a-\a_1+j}(\omega)\ 
\pi_{\fI_a} \left(
u_{\a-\a_1+j+1} u -u u_{\a-\a_1+j-1}\right)\nonumber\\
 &&\quad=\sum_{\a\in \cN^k}\sum_{j=0}^{\a_1} P^{\fI_a}_{\a-\a_1+j}(\omega) 
\left(\omega^{\nu(\a-\a_1+j,-2)}
u_{\overline{\a-\a_1+j+1,0}} 
-\omega^{\nu(\a-\a_1+j,2)}u_{\overline{0,\a-\a_1+j-1}}\right),\label{qeqa}
\end{eqnarray}
where the notation $u_{\overline{\beta}}$ stands for the standard normally ordered monomial which is similar to $u_\beta$.

As an example, we work out the case when $k=3$. The elements in the set $\cN^3$
are listed in \eqref{n3}. According to \eqref{qeqa}, we have
\begin{eqnarray*}
&&  \d_{t_3}(u)=P^{\fI_a}_{(2,1,0)}(\omega) (u_3 u_2 u_1 u-\omega u_1 u^2 
u_{-1})
+P^{\fI_a}_{(1,0,-1)}(\omega) (u_2 u_1 u^2-u^2 u_{-1} u_{-2})\\
&&\qquad \quad+P^{\fI_a}_{(0,-1,-2)}(\omega) (\omega u_1 u^2 u_{-1}-u u_{-1} 
u_{-2} u_{-3})
+P^{\fI_a}_{(1,1,0)}(\omega) ( u_2^2 u_1 u- u^3u_{-1})\\
&&\qquad \quad+P^{\fI_a}_{(0,0,-1)}(\omega) (u_1^2u^2-u u_{-1}^2 u_{-2})
+P^{\fI_a}_{(1,0,0)}(\omega) ( u_2 u_1^2 u- u^2u_{-1}^2)\\
&&\qquad \quad+P^{\fI_a}_{(0,-1,-1)}(\omega) (u_1u^3-u u_{-1} u_{-2}^2)
+P^{\fI_a}_{(0,0,0)}(\omega) (u_1^3 u-u u_{-1}^3)\\
&&\qquad \quad= u_3 u_2 u_1 u
+u_2 u_1 u^2-u^2 u_{-1} u_{-2}
-u u_{-1} u_{-2} u_{-3}+  u_2^2 u_1 u- u^3u_{-1}\\
&&\qquad \quad
+(1+\omega) (u_1^2u^2-u u_{-1}^2 u_{-2})
+(1+\omega) ( u_2 u_1^2 u- u^2u_{-1}^2)+ u_1u^3-u u_{-1} u_{-2}^2
+ u_1^3 u-u u_{-1}^3 ,
\end{eqnarray*}
where we compute
$P_{\a}^{\fI_a}(\om)$
using \eqref{binom} in Proposition \ref{prop1}.

Although Theorem \ref{thqa} is valid for the quantisation ideal $\fI_b$,
to compute the quantum Volterra hierarchy in the quantum algebra $\fA_{\fI_b}$
is much harder since we don't have the similar result for $\fI_b$ as
the one in Proposition \ref{prop1} for $\fI_a$.

When $\om=1$, Gaussian binomial coefficients become the ordinary binomial
coefficients. Proposition \ref{prop1} gives the formulas $P_{\a}(1)$
for the commutative polynomials $X^{(k)}$ given by (\ref{xk}). It follows from 
(\ref{qeqa})
that the explicit formula for the whole hierarchy of symmetries of
the classical (commutative) Volterra lattice is
\begin{eqnarray}\label{hsymv}
 \d_{t_k}(u)=\sum_{\a\in \cN^k}\sum_{j=0}^{\a_1} P_{\a-\a_1+j}(1) \left(
u_{\a-\a_1+j+1} -u_{\a-\a_1+j-1}\right) u, \quad k\in \N.
\end{eqnarray}

\section{Quantum Hamiltonians for the quantised Volterra 
hierarchies}\label{sec3}
In the quantum theory we replace real valued commutative variables by self 
adjoint
operators with respect to some Hermitian conjugation $\dagger$.
 The Hermitian conjugation $\dagger$ in algebra  $\fA$ is defined by the
following rules
\[
 u_n^\dagger=u_n,\quad \alpha^\dagger=\bar{\alpha},\quad 
(a+b)^\dagger=a^\dagger+ b^\dagger,\quad (ab)^\dagger=b^\dagger 
a^\dagger,\qquad 
u_n,a,b\in\fA,\ \ \alpha\in\bbbc,
\]
where $\bar{\alpha}$ is the complex conjugate of $\alpha\in \bbbc$.
The Hermitian conjugation $\dagger$ can be extended to the quantum algebras
$\fA_{\fI_a}$ and $\fA_{\fI_b}$ by letting $\omega^\dagger=\omega^{-1}$.
 The quantisation ideals $\fI_a$ (\ref{idi}) and $\fI_b$ (\ref{idj})
 are $\dagger$--stable. We introduce the square root $q=e^{\frac12\i \hbar}$ of
$\omega=e^{\i \hbar}$
with  $\hbar\in\R$ a real constant (an analog
of the Plank constant).
The quantised Volterra hierarchy in the quantum algebra
$\fA_{\fI_a}$ is presented in the form \cite{CMW}
\begin{equation}\label{qvolh}
  u_{t_1}=q(u_1u-uu_{-1}),\qquad u_{t_\ell}= 
q^\ell\left(\cS(X^{(\ell)})u-u\cS^{-1}(X^{(\ell)})\right), \quad \ell\in\N.
\end{equation}
 As a consequence of the results shown later in this paper, all these
derivations are self-adjoint, which justifies the rescaling by $q^{\ell}$.

In the paper \cite{CMW}, we presented the Volterra lattice and its first
symmetry 
in
Heisenberg form
\begin{equation}\label{qvola}
\begin{array}{ll}
    \d_{t_1}(u_n)=\dfrac{1}{q^{-1}-q}[H_1,u_n],\qquad
&H_1=\sum\limits_{k\in\Z}u_k;\\
    \d_{t_2}(u_n)=\dfrac{1}{q^{-2}-q^2}[H_2,u_n],&H_2=
    \sum\limits_{k\in\Z}(u_k^2+u_{k+1}u_k+u_ku_{k+1}),
  \end{array}
\end{equation}
where $H_1$ and $H_2$ are self-adjoint, algebraically independent and commuting 
Hamiltonians in $\fA_{\fI_a}$.

In the quantum algebra $\fA_{\fI_b}$ with commutation relations (\ref{comm2})
we can also write the equation (\ref{voltf2}) in Heisenberg form
\begin{equation}\label{qvolb}
  \d_{t_2}(u_n)=\dfrac{1}{q^{-2}-q^2}[H_2,u_n].
\end{equation}
Note that in the quantum algebra $\fA_{\fI_b}$ we have $H_2=H_1^2$ and
$H_2^\dagger=H_2$.
In this section, we derive the explicit expressions for the self-adjoint, 
algebraically independent and commuting Hamiltonians of the Volterra hierarchy
in both quantised algebras  $\fA_{\fI_a}$ and $\fA_{\fI_b}$.

\subsection{Quantum Hamiltonians $H_n$ in $\fA_{\fI_a}$}
In section \ref{sec22}, we give the definition of the sets $\cA^k$, 
$\cZ^k_{\geq}$ and $\cN^k$, c.f. (\ref{seta}), (\ref{setz}) and \eqref{setn} (or 
equivalently \eqref{setnn}),
whose elements $\a$ are associated to the $k$-degree monomials $u_\a$ for 
$k\in\N$.
We now define another set related to them, namely,
 \begin{eqnarray*}
  \cM_j^k=\left\{\a\in \cZ^k_{\geq}\big{|}\ \exists\ i\in \{ 1, 2, \cdots, k\}\ 
\mbox{such that}\
\a_{i} = j \right\}.
 \end{eqnarray*}
Clearly, we have
$$ \cM_0^k=\cZ^k_{\geq} \cap \cA^k \quad \mbox{and} \quad \cN^k\subset 
\cM_0^k.$$
Note that the definition of the polynomials $P_{\a}(\om)$ for $\a \in \cM_0^k$ 
in (\ref{pa}), which can be extended to the set
 $\cZ^k_{\geq}$ by the convention $P_{\a}(\om) = 0$ if $\a \notin \cM_0^k$.

We first prove several lemmas. In the proofs of these lemmas, we drop the 
up-index and simply
write $P_{\a}(\om)$ for $P_{\a}^{\fI_a}(\om)$.

\begin{Lem}\label{lem1}
Let $\a\in \cZ^{\ell}_{\geq}$ be such that $\nu(\a,1)=\nu(\a,-1)$. Then 
$P_{\a+1}^{\fI_a}(\om) = P_{\a-1}^{\fI_a}(\om)$ and $u$ commutes with $u_{\a}$ 
in $\fA_{\fI_a}$,
i.e., $\pi_{\fI_a} \left([u, u_{\a}]\right)=0$.
\end{Lem}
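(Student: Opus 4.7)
The plan is to tackle the two assertions of the lemma independently, since they reflect different structural features of $\fA_{\fI_a}$.

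For the commutativity $\pi_{\fI_a}([u, u_\a]) = 0$, I would simply slide $u = u_0$ rightward through the monomial $u_\a = u_{\a_1} u_{\a_2} \cdots u_{\a_\ell}$ using the defining relations of $\fA_{\fI_a}$: each pass over a $u_1$ contributes a factor of $\om$ (from $u_0 u_1 = \om u_1 u_0$), each pass over a $u_{-1}$ contributes $\om^{-1}$ (from $u_0 u_{-1} = \om^{-1} u_{-1} u_0$), and every remaining generator $u_k$ with $|k|\neq 1$ commutes with $u_0$. The total phase is therefore $\om^{\nu(\a, 1) - \nu(\a, -1)}$, giving
\[
\pi_{\fI_a}(u \cdot u_\a) = \om^{\nu(\a, 1) - \nu(\a, -1)}\, \pi_{\fI_a}(u_\a \cdot u),
\]
which equals $\pi_{\fI_a}(u_\a \cdot u)$ exactly when $\nu(\a, 1) = \nu(\a, -1)$.

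For the polynomial identity $P^{\fI_a}_{\a+1}(\om) = P^{\fI_a}_{\a-1}(\om)$, the main tool is the already-established relation \eqref{eqp}, which I apply twice. Applied to $\a$ it gives $P_{\a-1}(\om)(1 - \om^{\nu(\a, 0)}) = P_\a(\om)(1 - \om^{\nu(\a, 1)})$; applied to $\a + 1$ (which again lies in $\cZ^\ell_\geq$) and using $\nu(\a+1, i) = \nu(\a, i-1)$ it gives $P_{\a+1}(\om)(1 - \om^{\nu(\a, 0)}) = P_\a(\om)(1 - \om^{\nu(\a, -1)})$. Subtracting these two relations yields
\[
\bigl(P_{\a+1}(\om) - P_{\a-1}(\om)\bigr)(1 - \om^{\nu(\a, 0)}) = P_\a(\om)\bigl(\om^{\nu(\a, 1)} - \om^{\nu(\a, -1)}\bigr),
\]
whose right-hand side vanishes under the hypothesis $\nu(\a, 1) = \nu(\a, -1)$. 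Whenever $\nu(\a, 0) > 0$, dividing by the nonzero polynomial $1 - \om^{\nu(\a, 0)}$ delivers the desired equality immediately.

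The main obstacle is the degenerate case $\nu(\a, 0) = 0$, in which the division above is illegal. Here I would argue that both coefficient polynomials are themselves zero. The constraint $\a \in \cZ^\ell_\geq$ with no zero entry forces all components of $\a$ to have a common strict sign, since a positive entry immediately followed by a non-positive one would compel an intermediate $0$ by the ``gap-at-most-$1$'' condition $\a_i \leq \a_{i+1} + 1$. Combined with $\nu(\a, 1) = \nu(\a, -1) = 0$ (forced by the sign-uniformity and the lemma's hypothesis), every entry of $\a$ then satisfies $|\a_i| \geq 2$. Consequently neither $\a + 1$ nor $\a - 1$ contains a $0$, so both lie outside $\cM_0^\ell$ and the two polynomials vanish by the convention $P^{\fI_a}_\beta(\om) = 0$ for $\beta \in \cZ^\ell_\geq \setminus \cM_0^\ell$. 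This boundary analysis is the only genuinely nontrivial point; the rest of the lemma is a direct consequence of \eqref{eqp} and the defining commutation relations.
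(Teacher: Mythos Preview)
Your proof is correct and follows essentially the same approach as the paper: both invoke the commutation relations \eqref{comm1} for the commutativity claim and derive the polynomial identity from \eqref{eqp} together with the convention $P_\beta=0$ for $\beta\notin\cM_0^\ell$. The only difference is the organisation of the case split---the paper splits on whether $\nu(\a,\pm1)=0$, whereas you split on whether $\nu(\a,0)=0$; these partitions overlap but lead to the same conclusion, and your handling of the boundary case via sign-uniformity is a slightly longer route to the same vanishing observation the paper states directly.
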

\begin{proof} It is obvious that  $u$ commutes with such $u_{\a}$ in
$\fA_{\fI_a}$ due to the commutation relations (\ref{comm1}). We now prove
the rest of the statement by considering two cases. If $\nu(\a,1)=\nu(\a,-1) = 
0$, that is, $\a$ contains neither $-1$ nor $1$, this leads to $P_{\a+1}(\om) =
P_{\a-1}(\om)=0$
since neither $\a+1$ nor $\a-1$ is admissible (requiring to contain $0$).  If 
$\nu(\a,1)=\nu(\a,-1) \neq 0$,  we also have that $\nu(\a,0)\neq 0$ since 
$\a\in 
\cZ^{\ell}_{\geq}$.
From (\ref{eqp}) it follows that
\begin{eqnarray}\label{p1m1}
 P_{\a+1}(\omega)=\frac{\om^{\nu(\a,-1)}-1}{\om^{\nu(\a,0)}-1} P_{\a}(\omega); 
\quad P_{\a-1}(\omega)=\frac{\om^{\nu(\a,1)}-1}{\om^{\nu(\a,0)}-1} 
P_{\a}(\omega)
\end{eqnarray}
implying $P_{\a+1}(\omega)=P_{\a-1}(\omega)$.
\end{proof}

\begin{Lem}\label{lem2}
Let $Y^{(\ell)}= \cS(X^{(\ell)}) u -u \cS^{-1} (X^{(\ell)})$. Then in the 
quantum 
algebra $\fA_{\fI_a}$ we have
\begin{equation}\label{kll}
 Y^{(\ell)}= \sum_{\a \in \mathcal{N}^{\ell}} \sum_{k \in \bbbz} \frac{ 
P_{\a}(\omega)}{\om^{\nu(\a,0)}-1} [u, u_{\a+k}].
 \end{equation}
\end{Lem}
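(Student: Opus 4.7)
The plan is to combine Theorem \ref{thqa} with the iterated version of identity \eqref{p1m1} from the proof of Lemma \ref{lem1}, and then reduce the claim to an elementary per-orbit identity to be verified inside $\fA_{\fI_a}$ using the commutation relations \eqref{comm1}.

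First I would apply Theorem \ref{thqa} (with the reindexing $i = \a_1 - j$) and then the closed form $P_{\a-i}(\omega) = \tfrac{\omega^{\nu(\a,i)}-1}{\omega^{\nu(\a,0)}-1}P_\a(\omega)$, valid for $0 \leq i \leq \a_1$ and obtained by iterating \eqref{p1m1} along the chain $\a, \a-1, \ldots, \a-i$ (using $\nu(\a-j,0) = \nu(\a,j) > 0$ since $\a \in \cN^\ell$ attains every value in $[0,\a_1]$). This factorises the coefficient in $Y^{(\ell)} = \cS(X^{(\ell)})u - u \cS^{-1}(X^{(\ell)})$ as $\tfrac{P_\a(\omega)}{\omega^{\nu(\a,0)}-1}$ multiplied by a $\nu$-dependent combination of shifted monomials.

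Because every $\gamma \in \cZ^\ell_{\geq}$ decomposes uniquely as $\gamma = \a + k$ with $\a \in \cN^\ell$ and $k = \gamma_\ell$, the orbits $\{u_{\a+k} : k \in \Z\}$ are disjoint for different $\a \in \cN^\ell$. Thus it suffices to prove, for each fixed $\a \in \cN^\ell$, the per-orbit identity
\[
\sum_{i=0}^{\a_1} (\omega^{\nu(\a,i)}-1)\bigl(u_{\a-i+1}u - u\, u_{\a-i-1}\bigr) = \sum_{k\in\Z} [u, u_{\a+k}] \quad \text{in } \fA_{\fI_a}.
\]
Next I would reindex by $k=1-i$ in the first summand and $k=-1-i$ in the second and, using that $\nu(\a,i)=0$ outside $[0,\a_1]$, extend both sums to $k \in \Z$ with finite support.

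The final step is a direct verification: the commutation relations \eqref{comm1} give $u\, u_{\a+k} = \omega^{\nu(\a,1-k)-\nu(\a,-1-k)}\, u_{\a+k}\, u$ by moving $u$ past each factor of $u_{\a+k}$, so $[u, u_{\a+k}] = (\omega^{\nu(\a,1-k)-\nu(\a,-1-k)}-1) u_{\a+k}\, u$. Substituting this identification into both sides of the per-orbit identity makes all contributions cancel algebraically. The main obstacle will be purely organisational, namely carefully tracking the index shifts and $\omega$-exponents; the structural reason the calculation closes is that the ratio $(\omega^{\nu(\a,i)}-1)/(\omega^{\nu(\a,0)}-1)$ produced by Lemma \ref{lem1} matches exactly the coefficients of $u_{\a+k} u$ produced by expanding the commutator via \eqref{comm1}.
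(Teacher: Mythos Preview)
Your approach is correct and takes a genuinely different, more streamlined route than the paper's. The paper works over the full set $\cM_0^\ell$, splits it according to whether $\nu(\a,0)=\nu(\a,\pm 2)$, uses Lemma~\ref{lem1} to dispose of the balanced pieces, converts what remains to commutators, and then runs through a case analysis on which of $\nu(\a,\pm 1),\nu(\a,0)$ vanish before regrouping via the relation $P_{\a+m}/(\omega^{\nu(\a+m,0)}-1)=P_\a/(\omega^{\nu(\a,0)}-1)$. You instead start directly from the $\cN^\ell$--parameterisation in Theorem~\ref{thqa}, use the telescoped form $P_{\a-i}(\omega)=\tfrac{\omega^{\nu(\a,i)}-1}{\omega^{\nu(\a,0)}-1}P_\a(\omega)$ (valid because every intermediate shift satisfies $\nu(\a-j,0)=\nu(\a,j)>0$ for $0\le j<i\le\a_1$) to factor out the common coefficient, and reduce everything to a single per--$\a$ identity. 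That identity is indeed elementary: writing $a=\nu(\a,1-k)$, $b=\nu(\a,-1-k)$ and using $u\,u_{\a+k}=\omega^{a-b}u_{\a+k}u$, the coefficient of $u_{\a+k}u$ on your left-hand side is $(\omega^a-1)-(\omega^b-1)\omega^{a-b}=\omega^{a-b}-1$, matching the right. Two minor remarks: the ``disjoint orbits'' observation is not actually needed for sufficiency (summing the per-orbit identities already gives the result), and it would be worth stating the one-line coefficient computation above rather than leaving ``all contributions cancel algebraically'' to the reader. What your approach buys is the complete elimination of the case analysis; what the paper's approach buys is that it does not appeal to Theorem~\ref{thqa} and works directly from the definition of $P_\a$.
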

\begin{proof}
It follows from (\ref{pa}) that
$$\pi_{\fI_a} \left(\cS(X^{(\ell)})\right)=\sum_{\a \in \cM_0^{\ell}} 
P_{\a}(\omega) \, \,  u_{\a+1}.$$
 Thus we have
\begin{eqnarray}\label{pas1}
\pi_{\fI_a} \left(\cS(X^{(\ell)}) u\right)& =& \sum_{\a \in \cM_0^{\ell}} 
P_{\a}(\omega) \, \,  \pi_{\fI_a} (u_{\a+1} u) \nonumber \\
&=& \sum_{\a \in \cM^{\ell}_{0,(0 , -2)}} P_{\a}(\omega) \, \,  
\pi_{\fI_a}(u_{\a+1} u)+\sum_{\a \in \overline{\cM}^{\ell}_{0,(0, -2)}} 
P_{\a}(\omega) \, \,  \pi_{\fI_a}(u_{\a+1} u),
\end{eqnarray}
where we use the notations $\cM^{\ell}_{k,(i,j)}=\{\a \in \cM_k^{\ell}\big{|} 
\nu(\a,i)=\nu(\a,j)\}$ and 
$\overline{\cM}^{\ell}_{k,(i,j)}=\cM_k^{\ell}\setminus \cM^{\ell}_{k,(i,j)}$.
In the same way, we have
$$
\pi_{\fI_a} \left(u \cS^{-1}(X^{(\ell)}) \right)=
\sum_{\a \in \cM^{\ell}_{0,(0,2)}} P_{\a}(\omega) \, \, \pi_{\fI_a}(u u_{\a-1}) 
+\sum_{\a \in \overline{\cM}^{\ell}_{0,(0,2)}} P_{\a}(\omega) \, \,  
\pi_{\fI_a}(u u_{\a-1}).
$$
We claim that
\begin{equation*}
 \sum_{\a \in \mathcal{M}^{\ell}_{0,(0,2)}} P_{\a}(\omega) u_{\a-1}=  \sum_{\a 
\in \mathcal{M}^{\ell}_{0,(0,-2)}} P_{\a}(\omega)  u_{\a+1}
\end{equation*}
and that both sides commute with $u$ in $\fA_{\fI_a}$.
This is equivalent to
$$\sum_{\a \in \mathcal{M}^{\ell}_{-1,(-1,1)}} P_{\a+1}(\omega) u_{\a}=  
\sum_{\a \in \mathcal{M}^{\ell}_{1,(-1,1)}} P_{\a-1}(\omega)  u_{\a}, $$
which is indeed true using Lemma \ref{lem1}.

We then rewrite the rest of sums in (\ref{pas1})
as
\begin{eqnarray*}
&&\sum_{\a \in \overline{\cM}^{\ell}_{0,(0,-2)}} P_{\a}(\omega)  u_{\a+1}u =
\sum_{\a \in 
\overline{\cM}^{\ell}_{0,(0,-2)}}{\frac{P_{\a}(\omega)}{\om^{\nu(\a,0)-\nu(\a,-2
)}-1} [u, u_{\a+1}] }\\
&&\qquad=\sum_{\a \in \overline{\cM}^{\ell}_{1,(1,-1)}} 
{\frac{P_{\a-1}(\omega)}{\om^{\nu(\a,1)-\nu(\a,-1)}-1} [u, u_{\a}] }
=\sum_{\a \in \overline{\cM}^{\ell}_{1,(1,-1)}} {\frac{\om^{\nu(\a,-1)} 
P_{\a-1}(\omega)}{\om^{\nu(\a,1)}-\om^{\nu(\a,-1)}} [u, u_{\a}] } .
\end{eqnarray*}

Similarly, we are able to show that
\begin{equation*}
\sum_{\a \in \overline{\cM}^{\ell}_{(0,2)}}P_{\a}(\omega) u u_{\a-1}  = 
\sum_{\a 
\in \overline{\cM}^{\ell}_{-1,(-1,1)}} {\frac{\om^{\nu(\a,1)} 
P_{\a+1}(\omega)}{\om^{\nu(\a,1)}-\om^{\nu(\a,-1)}} [u, u_{\a}] } .
\end{equation*}
Taking the difference yields
\begin{equation} \label{eq}
    Y^{(\ell)}= \sum_{\a \in \overline{\cM}^{\ell}_{1,(-1,1)}} 
{\frac{\om^{\nu(\a,-1)} P_{\a-1}(\omega)}{\om^{\nu(\a,1)}-\om^{\nu(\a,-1)}} [u, 
u_{\a}] } -
    \sum_{\a \in \overline{\cM}^{\ell}_{-1,(-1,1)}} {\frac{\om^{\nu(\a,1)} 
P_{\a+1}(\omega)}{\om^{\nu(\a,1)}-\om^{\nu(\a,-1)}} [u, u_{\a}] } .
\end{equation}
We simplify it by splitting into different cases.
When $\nu(\a,1) \nu(\a,-1)\neq 0$, $\a$ belongs to both sums in \eqref{eq}. 
Using (\ref{p1m1}), the difference
of fractions simplifies into
$$
\frac{\om^{\nu(\a,-1)} P_{\a-1}(\omega)-\om^{\nu(\a,1)} 
P_{\a+1}(\omega)}{\om^{\nu(\a,1)}-\om^{\nu(\a,-1)}}=\frac{ 
P_{\a}(\omega)}{\om^{\nu(\a,0)}-1} u_{\a} .
$$
When $\nu(\a, -1) =0$ but $\nu(\a,0) \nu(\a,1) \neq 0$, $\a$
only appears in the first sum in \eqref{eq} and using (\ref{p1m1}) in that case 
one
can write
$$ 
\frac{P_{\a-1}(\omega)}{\omega^{\nu(\a,1)}-1}=\frac{P_{\a}(\omega)}{\omega^{\nu(
\a,0)}-1}.$$
Similarly, when $\nu(\a,1)=0$ but $\nu(\a,0) \nu(\a,-1) \neq 0$, we have
$$
\frac{P_{\a+1}(\omega)}{\omega^{\nu(\a,-1)}-1}=\frac{P_{\a}(\omega)}{\omega^{\nu
(\a,0)}-1}.$$
Finally, if $\nu(\a,-1)= \nu(\a,0)=0$ but $\nu(\a,1) \neq 0$, then $\a$ appears 
in the
first sum only in \eqref{eq} and we can rewrite the term as
$\frac{P_{\beta}(\omega)}{\omega^{\nu(\beta,0)}-1} u_{\beta+1},$
where $\beta$ is an element of $\mathcal{N}^{\ell}$, that is, to say 
$\beta_l=0$.

In the mirror case where $\nu(\a,1)= \nu(\a,0)=0$ but $\nu(\a,-1) \neq 0$, then 
$\a$
appears in the second sum in \eqref{eq} and we can rewrite the term as
$\frac{P_{\gamma}(\omega)}{\omega^{\nu(\gamma,0)}-1} u_{\gamma-1},$
where $\gamma \in \mathcal{M}_0^k$ is such that $\gamma_1=0$.

Thus we have so far proved that
$$Y^{\ell}= \sum_{\a \in \overline{\mathcal{M}}^{\ell}_{0,(1,-1)}} 
\frac{P_{\a}(\omega)}{\omega^{\nu(\a,0)}-1}[u,u_{\a}]+ \sum_{\beta \in 
\mathcal{N}^{\ell}} 
\frac{P_{\beta}(\omega)}{\omega^{\nu(\beta,0)}-1}[u,u_{\beta+1}]+ \sum_{\gamma 
\in \mathcal{M}^{\ell}_{0}, \gamma_{1}=0} 
\frac{P_{\gamma}(\omega)}{\omega^{\nu(\gamma,0)}-1}[u,u_{\gamma-1}].$$
Since elements $u_{\a}$ for $\a \in \mathcal{M}^{\ell}_{0,(1,-1)}$ commute with 
$u$ in $\fA_{\fI_a}$ following from Lemma \ref{lem1}, one can add them to the 
first sum
in the above formula and it becomes
\begin{equation} \label{eq2}
Y^{\ell}= \sum_{\a \in \mathcal{M}^{\ell}_{0}} 
\frac{P_{\a}(\omega)}{\omega^{\nu(\a,0)}-1}[u,u_{\a}]+ \sum_{\beta \in 
\mathcal{N}^{\ell}} 
\frac{P_{\beta}(\omega)}{\omega^{\nu(\beta,0)}-1}[u,u_{\beta+1}]+ \sum_{\gamma 
\in \mathcal{M}^{\ell}_{0}, \gamma_{1}=0} 
\frac{P_{\gamma}(\omega)}{\omega^{\nu(\gamma,0)}-1}[u,u_{\gamma-1}].
\end{equation}
Recursively applying (\ref{eqp}), we get $\frac{
P_{\a+m}(\omega)}{\om^{\nu(\a+m,0)}-1}=\frac{ 
P_{\a}(\omega)}{\om^{\nu(\a,0)}-1} 
$ when $\a\in \cM_0^{\ell}$ and $\a+m \in \cM_0^{\ell}$ for some $m\in \bbbz$.
Hence, we can rewrite \eqref{eq2} as
$$Y^{\ell}= \sum_{\a \in \mathcal{N}^{\ell}} \sum_{k=-\a_1-1}^1 \frac{ 
P_{\a}(\omega)}{\om^{\nu(\a,0)}-1} [u, u_{\a+k}].$$
Finally, adding extra shifts of $u_{\a}$ does not change the sum as these 
commute with $u$, and thus we complete the proof of the statement.
\end{proof}
\begin{theorem}\label{thm1}
The quantum Volterra hierarchy (\ref{qvolh}) in the algebra $\fA_{\fI_a}$ is
presented in the Heisenberg form:
\begin{equation}\label{qah}
 \d_{t_\ell}(u_n)=\frac{\i}{2\sin(\frac12\ell\hbar)}[H_\ell, u_n]
\end{equation}
where the Hamiltonians 
\begin{equation} \label{firstham}
        H_{\ell} = \sum_{\a \in \mathcal{N}^{\ell}} \sum_{k \in \bbbz} 
P_{\a}^{\fI_a}(\omega) \frac{\om^{\ell}-1}{\om^{\nu(\a,0)}-1} u_{\a+k},\quad 
\omega=q^2=e^{\i \hbar}, \qquad
\ell\in\bbbn
    \end{equation}
are self-adjoint and commute with each other.
\end{theorem}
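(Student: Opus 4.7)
The plan has three steps: derive the Heisenberg form, establish self-adjointness, and reduce commutativity to a formal-sum identity.

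\textbf{Heisenberg form.} This follows from Lemma~\ref{lem2} by a short calculation. With $\omega=q^{2}$ one has $\omega^\ell-1=q^\ell(q^\ell-q^{-\ell})$ and $2\sin(\tfrac12\ell\hbar)=(q^\ell-q^{-\ell})/\i$, so that $\frac{\i}{2\sin(\tfrac12\ell\hbar)}=-1/(q^\ell-q^{-\ell})$. Substituting \eqref{firstham} into the right-hand side of \eqref{qah} and using $[u,u_{\a+k}]=-[u_{\a+k},u]$, one obtains
\[
\frac{\i}{2\sin(\tfrac12\ell\hbar)}[H_\ell,u]=q^{\ell}\sum_{\a\in\cN^\ell}\sum_{k\in\bbbz}\frac{P_{\a}^{\fI_a}(\omega)}{\omega^{\nu(\a,0)}-1}[u,u_{\a+k}]=q^\ell Y^{(\ell)}=\d_{t_\ell}(u),
\]
where the middle equality is Lemma~\ref{lem2} and the last is \eqref{qvolh}. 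Applying $\cS^n$ gives the formula for each $u_n$.

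\textbf{Self-adjointness.} Fix $\a\in\cN^\ell$ with multiplicities $\kappa_i=\nu(\a,i)$. Under the commutation relations \eqref{comm1} only pairs of generators whose indices differ by one contribute, so reversing $u_{\a+k}$ and returning to normal order yields $(u_{\a+k})^\dagger=\omega^{m(\a)}u_{\a+k}$ with $m(\a)=\sum_{i\geq1}\kappa_i\kappa_{i-1}$, independent of $k$. Using the Gaussian-binomial palindromy $\binom{m}{r}_{\omega^{-1}}=\omega^{-r(m-r)}\binom{m}{r}_\omega$ and $\frac{\omega^{-\ell}-1}{\omega^{-\kappa_0}-1}=\omega^{\kappa_0-\ell}\frac{\omega^\ell-1}{\omega^{\kappa_0}-1}$, a direct bookkeeping shows that the coefficient $c_\a=P_\a^{\fI_a}(\omega)\frac{\omega^\ell-1}{\omega^{\kappa_0}-1}$ satisfies $c_\a(\omega^{-1})\omega^{m(\a)}=c_\a(\omega)$: the total exponent equals $\kappa_0-\ell-\sum_{i\geq1}\kappa_i(\kappa_{i-1}-1)+m(\a)$, which vanishes thanks to $\sum_{i\geq1}\kappa_i=\ell-\kappa_0$. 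Hence each summand $c_\a\sum_k u_{\a+k}$ is self-adjoint, and therefore so is $H_\ell$.

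\textbf{Commutativity.} Since derivations of $\fA_{\fI_a}$ agreeing on generators agree everywhere, the Heisenberg form extends to $\d_{t_\ell}(a)=c_\ell[H_\ell,a]$ for every $a\in\fA_{\fI_a}$, with $c_\ell=\frac{\i}{2\sin(\tfrac12\ell\hbar)}$; in particular $\d_{t_\ell}(H_k)=c_\ell[H_\ell,H_k]$. Combining $[\d_{t_\ell},\d_{t_k}]=0$ with the Jacobi identity gives $[[H_\ell,H_k],u_n]=0$ for every $n$, so $[H_\ell,H_k]$ is central. To conclude that $[H_\ell,H_k]$ actually vanishes, write $H_k=\sum_{n\in\bbbz}\cS^n(h^{(k)})$ with local density $h^{(k)}=\sum_{\a\in\cN^k}c_\a u_\a$; applying $\d_{t_\ell}$ term by term and using the explicit expansion \eqref{qeqa} together with the coefficient identity \eqref{eqp}, one verifies that $\d_{t_\ell}(h^{(k)})$ belongs to $(\cS-1)\fA_{\fI_a}$. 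The telescoping sum $\d_{t_\ell}(H_k)=\sum_n\cS^n\d_{t_\ell}(h^{(k)})$ then vanishes and, since $c_\ell\neq0$, $[H_\ell,H_k]=0$. The main obstacle is this last step: checking the total-difference property requires a careful tracking of the normally ordered monomials produced by \eqref{qeqa} and of their Gaussian-binomial coefficients.
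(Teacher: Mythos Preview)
Your derivations of the Heisenberg form and of self-adjointness are correct and match the paper's argument line for line (same use of Lemma~\ref{lem2}, same Gaussian-binomial palindromy, same bookkeeping of exponents).

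The gap is in the commutativity step. You correctly deduce from $[\d_{t_\ell},\d_{t_k}]=0$ and the Jacobi identity that $Q:=[H_\ell,H_k]$ commutes with every $u_n$, i.e.\ $Q$ is central. At this point the paper finishes in one line: since the commutation relations \eqref{comm1} act diagonally on normally ordered monomials (commuting a monomial $u_\beta$ past $u_n$ multiplies it by $\omega^{\nu(\beta,n-1)-\nu(\beta,n+1)}$), the condition $[Q,u_n]=0$ for all $n$ forces every monomial appearing in $Q$ with nonzero coefficient to itself be central. But, as in the proof of Proposition~\ref{Za}, a monomial $u_\beta$ is central iff $\nu(\beta,n+1)=\nu(\beta,n-1)$ for all $n\in\bbbz$; in the non-periodic algebra this forces $\beta$ to be empty, so no monomial of positive degree is central. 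Hence $Q=0$.

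Instead of this, you propose to prove $\d_{t_\ell}(H_k)=0$ by showing that $\d_{t_\ell}(h^{(k)})\in(\cS-1)\fA_{\fI_a}$ via the explicit expansion \eqref{qeqa} and the identity \eqref{eqp}. You yourself flag this as ``the main obstacle'' and do not carry it out; as written, the argument is incomplete. More importantly, the detour is unnecessary: you have already done the real work by establishing centrality, and the triviality of the centre in the infinite case closes the argument immediately. If you want to keep a self-contained proof, simply replace your total-difference paragraph with the one-line observation that centrality of $Q$ forces each of its monomials to be central, which is impossible for monomials of degree $\ell+k>0$.
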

\begin{proof}
The first part of the statement follows immediately from the definition of the 
quantised hierarchy \eqref{qvolh} and Lemma \ref{lem2}. We now show all 
Hamiltonians $H_{\ell}$ are self-adjoint. First note that, for any nonnegative 
integers $a$ and $b$,
$$
\binom{a+b-1}{a}_{\om^{-1}}=\om^{a(1-b)}\binom{a+b-1}{a}_{\om} .
$$
For $\a \in \cN^{\ell}$,  we have $\ell=\k_{\a_1}+\k_{\a_1-1}+\cdots+\k_1+\k_0$ 
and $\pi_{\fI_a} 
\left({u_\a}^{\dagger}\right)=\om^{\left(\k_{\a_1}\k_{\a_1-1}+\cdots+\k_2\k_1+\
k_1\k_0\right)} u_\a$.
Thus it follows from Proposition \ref{prop1} that
\begin{eqnarray*}
P_{\a}^{\fI_a}(\om^{-1})= 
\om^{\ell-\k_0-\left(\k_{\a_1}\k_{\a_1-1}+\cdots+\k_2\k_1+\k_1\k_0\right)} 
P_{\a}^{\fI_a}(\om), \quad \a \in \cN^{\ell}
\end{eqnarray*}
and hence
\begin{eqnarray*}
 H_{\ell}^\dagger=\sum_{\a \in \cN^{\ell}} \frac{\omega^{-\ell}-1
}{\om^{-\nu(\a,0)}-1} P_{\a}^{\fI_a}(\om^{-1})
 \sum_{k \in \mathbb{Z}}\cS^k \pi_{\fI_a} \left({u_\a}^\dagger\right)=H_{\ell}.
\end{eqnarray*}

Finally, we show that the Hamiltonians commute with each other. Let $\ell_1$ 
and 
$\ell_2$ be two positive integers and
define $Q=[H_{\ell_1}, H_{\ell_2}]$. Then $Q$ is of the form 
$$Q=\sum_{k\in\mathbb{Z}} \sum_{\a \in \cN^{\ell_1+\ell_2}}  
T_{\a}(\omega)u_{\a+k}$$
for some fractions $T_{\a}(\omega)$.
We know from \cite{CMW} that $[\d_{t_{\ell_1}}, \d_{t_{\ell_2}}]=0$. Hence for 
any $f\in \fA_{\fI_a}$, we have
$$
[H_{\ell_1}, [H_{\ell_2}, f]]-[H_{\ell_2}, [H_{\ell_1}, f]]=[f,[H_{\ell_1}, 
H_{\ell_2}]]=[f, Q]=0.
$$
Thus, if $Q\ne 0$, then every monomial of $Q$   belongs to the
center of $\fA_{J_a}$, which is impossible (see the proof of 
Proposition \ref{Za}).
\end{proof}

We apply Theorem \ref{thm1} to find the Hamiltonians for lower numbers $\ell$. 
When $\ell=1$, we know $\cN^1=\{(0)\}$ leading to
$$H_1=\sum_{k\in\mathbb{Z}}{u_k}.$$
When $\ell=2$, there are two elements in  $\cN^2$, namely, $(1,0)$ and 
$(0,0)$. It follows from (\ref{paa}) (or using Proposition \ref{prop1}) that
$P_{(1,0)}(\om) = 1$ and
$P_{(0,0)}(\om) = 1$.
Thus $$H_2=\sum_{k\in\mathbb{Z}}{u_k^2}+(\om+1) \sum_{k\in\mathbb{Z}}{u_{k+1} 
u_k}. $$ These are the same as those given by (\ref{qvola}).

When $\ell=3$, the set $\cN^3$ is given by \eqref{n3} and we have
$$P_{(2,1,0)}(\om) = 1,\quad P_{(1,1,0)}(\om)=1, \quad  P_{(1,0,0)}(\om)=1+\om, 
\quad P_{(0,0,0)}(\om)=1.$$
Hence
$$H_3=\sum_{k\in\mathbb{Z}}{u_k^3}+(\om^2+\om+1) 
\sum_{k\in\mathbb{Z}}\left(u_{k+2}u_{k+1}u_k+u_{k+1}^2u_k+u_{k+1}u_k^2\right). 
$$

For the quintic symmetry of the Volterra equation in $\fA_{\fI_a}$, that is,
$\ell=4$, the cardinality of $\cN^4$ is $8$ and whose elements is given in
\eqref{ex4a}.
Following (\ref{paa}) or using Proposition \ref{prop1}, we get
\begin{eqnarray*}
 && 
P_{(3,2,1,0)}(\om)=P_{(2,2,1,0)}(\om)=P_{(1,1,1,0)}(\om)=P_{(0,0,0,0)}(\om)=1;\\
 &&P_{(2,1,1,0)}(\om)=P_{(2,1,0,0)}(\om)=1+\om;
 \quad P_{(1,1,0,0)}(\om)=P_{(1,0,0,0)}=1+\om+\om^2.
\end{eqnarray*}
Thus using Theorem \ref{thm1} we obtain
\begin{equation*}
 \begin{split}
H_4 &=  \sum_{\mathbb{Z}}{u_k^4}+ (\om^2+1)(\omega+1)^2 
\sum_{\mathbb{Z}}{u_{k+2}u_{k+1}^2u_k}+ (1+ \omega+\omega^2)(\omega^2+1) 
\sum_{\mathbb{Z}}{u_{k+1}^2u_k^2} \\ &
+ (\om^2+1)(\omega+1) \sum_{\mathbb{Z}}{(u_{k+3}u_{k+2}u_{k+1}u_k + 
u_{k+2}^2u_{k+1}u_k+u_{k+2}u_{k+1}u_k^2+u_{k+1}^3u_k+ u_{k+1}u_k^3)} .
\end{split}
\end{equation*}

\subsection{Quantum Hamiltonians in $\fA_{\fI_b}$}
We have shown in \cite{CMW} that the derivations with even index $\ell$ in the 
nonabelian Volterra hierarchy stabilize the ideal $\fI_b$.
In this quantisation we can also find the Hamiltonians following the lines of the proofs in the previous section, except
that we now use the identity \eqref{eqp2} instead of \eqref{eqp}.
Because  of the similarity we will omit the proof and simply state the result with
examples.
\begin{theorem}\label{thm2}
The quantum Volterra hierarchy (\ref{qvolh}) in the quantum algebra 
$\fA_{\fI_b}$ is
presented in the Heisenberg form:
\begin{equation}\label{qbh}
 \d_{t_{2\ell}}(u_n)=\frac{\i}{2\sin(\ell \hbar)}[\hat{H}_{2\ell}, u_n],\ \
\hat{H}_{2\ell} = \sum_{\a \in \cN^{2\ell}} \sum_{k \in \mathbb{Z}}
 \frac{\omega^{2\ell}-1 }{{((-1)^{k}
\om)^{\nu(\a,0)}-1}} P_{\a}^{\fI_b}({ (-1)^{k}
\om})
 \cS^{k} u_\a ,
\end{equation}
where $\omega=q^2=e^{\i \hbar}, \hbar\in\R$ and $\ell\in \N$.
Moreover, all Hamiltonians are self-adjoint and commute with each other.
\end{theorem}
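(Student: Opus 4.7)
The plan is to follow the proof of Theorem \ref{thm1} almost step by step, making three systematic adjustments: (a) every commutation step must track the parity signs coming from \eqref{comm2}, (b) whenever the previous proof reduced a coefficient using \eqref{eqp}, here we substitute the sign-twisted identity \eqref{eqp2}, and (c) shifts by $k$ must carry the substitution $\omega\mapsto(-1)^k\omega$, reflecting that $\fA_{\fI_b}$ has a difference-algebra structure with $\cS(\omega)=-\omega$. The first step is to prove the $\fI_b$-analogue of Lemma \ref{lem1}: direct computation using \eqref{comm2} gives $uu_\a=(-1)^{\nu(\a,\geq 2)+\nu(\a,\leq-2)+\nu(\a,-1)}\omega^{\nu(\a,1)-\nu(\a,-1)}u_\a u$ for any $\a\in\cZ^{2\ell}_{\geq}$; when $\nu(\a,1)=\nu(\a,-1)$, using $|\a|=2\ell$, the prefactor reduces to $(-1)^{\nu(\a,0)+\nu(\a,1)}$, and a case analysis together with \eqref{eqp2} produces the sign-twisted analogue of $P_{\a+1}=P_{\a-1}$, namely $P_{\a+1}^{\fI_b}(\omega)=(\text{sign})\cdot P_{\a-1}^{\fI_b}(-\omega)$.

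Second, I would repeat the splitting of $Y^{(2\ell)}=\cS(X^{(2\ell)})u-u\cS^{-1}(X^{(2\ell)})$ according to whether $\nu(\a,\pm 2)$ and $\nu(\a,\pm 1)$ vanish, mirroring the case analysis of Lemma \ref{lem2}. Every instance of $\omega^{\nu(\a,\cdot)}$ is now accompanied by a parity sign arising from \eqref{comm2}, and systematic use of \eqref{eqp2} in place of \eqref{eqp} collapses these contributions into
\begin{equation*}
Y^{(2\ell)}=\sum_{\a\in\cN^{2\ell}}\sum_{k\in\Z}\frac{P_\a^{\fI_b}((-1)^k\omega)}{((-1)^k\omega)^{\nu(\a,0)}-1}[u,u_{\a+k}].
\end{equation*}
Multiplying by $q^{2\ell}$ and using the identity $q^{2\ell}/(\omega^{2\ell}-1)=1/(q^{2\ell}-q^{-2\ell})=1/(2\i\sin\ell\hbar)$ then yields the claimed Heisenberg form with $\hat{H}_{2\ell}$ exactly as stated.

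For $\hat{H}_{2\ell}^\dagger=\hat{H}_{2\ell}$, I would extend the argument used for $H_\ell$: iterating \eqref{eqp2} with $\omega\mapsto\omega^{-1}$ and tracking the phase picked up when reordering $u_\a^\dagger$ in $\fA_{\fI_b}$ yields an identity of the form $P_\a^{\fI_b}((-1)^k\omega^{-1})=(\text{phase})\cdot P_\a^{\fI_b}((-1)^k\omega)$, which, combined with a relabeling of the $k$-summation, gives term-by-term cancellation. For commutativity, the argument of Theorem \ref{thm1} transfers verbatim: since $[\d_{t_{2\ell_1}},\d_{t_{2\ell_2}}]=0$ in $\fA_{\fI_b}$ (shown in \cite{CMW}), the Jacobi identity forces $[\hat{H}_{2\ell_1},\hat{H}_{2\ell_2}]$ to commute with every generator $u_n$ and hence to lie in the center of $\fA_{\fI_b}$, which an analogue of Proposition \ref{Za} rules out. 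The main obstacle will be the sign bookkeeping: without a closed Gaussian-binomial expression for $P_\a^{\fI_b}$ (cf.\ Proposition \ref{prop1}), every reduction must be argued through \eqref{eqp2}, whose asymmetric form, containing $P_\a^{\fI_b}(\omega)$ on one side but $P_{\a-1}^{\fI_b}(-\omega)$ on the other, is precisely what forces the $(-1)^k$ twist threading through the final formula for $\hat{H}_{2\ell}$.
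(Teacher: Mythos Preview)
Your proposal is correct and follows essentially the same approach as the paper. In fact the paper omits the proof entirely, stating only that one should follow the lines of the proof of Theorem~\ref{thm1} while using identity \eqref{eqp2} in place of \eqref{eqp}; your plan spells out precisely this substitution together with the parity bookkeeping and the $\omega\mapsto(-1)^k\omega$ twist under shifts, which is the content the paper leaves implicit.
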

We apply Theorem \ref{thm2} to write down the quantum Hamiltonians for the 
cubic 
and quintic members  of the Volterra hierarchy in $\fA_{\fI_b}$.
\begin{Ex}
The cubic symmetry of the Volterra equation corresponds to $\ell=1$ in Theorem 
\ref{thm2}. We have $\cN^2=\{(1,0), (0,0)\}$. It follows from (\ref{pab}) that 
$P_{(1,0)}^{\fI_b}(\om) = 1$ and $P_{(0,0)}^{\fI_b}(\om) = 1.$
Thus
$$\hat{H}_2 =  \sum_{\mathbb{Z}}{u_k^2}+ \sum_{\mathbb{Z}} 
(1+(-1)^{k}\omega){u_{k+1}u_{k}},$$
which is the same as those given by (\ref{qvolb}).
\end{Ex}
\begin{Ex} When $\ell=2$ in Theorem \ref{thm2}, the set $\cN^4$ is given in 
(\ref{ex4a}). From (\ref{pab}) we get
\begin{eqnarray*}
 && 
P_{(3,2,1,0)}(\om)=P_{(2,2,1,0)}(\om)=P_{(1,1,1,0)}(\om)=P_{(0,0,0,0)}(\om)=1;\\
 &&P_{(2,1,1,0)}(\om)=P_{(2,1,0,0)}(\om)=1-\om;
 \quad P_{((1,1,0,0)}(\om)=P_{(1,0,0,0)}=1+\om+\om^2.
\end{eqnarray*}
Thus, using Theorem \ref{thm2} we obtain
\begin{equation*}
\begin{split}
\hat{H}_4 &= \sum_{\mathbb{Z}}{u_k^4} -(\om^4-1) 
\sum_{\mathbb{Z}}{u_{k+2}u_{k+1}^2u_k} + 
\sum_{\mathbb{Z}}{(\om^2+1)(\omega^2+(-1)^{k}\omega+1)
u_{k+1}^2u_k^2}\\
&+ \sum_{\mathbb{Z}} (\om^2+1)(1+(-1)^{k}\om)
\left(u_{k+3}u_{k+2}u_{k+1}u_k+u_{k+2}^2u_{k+1}u_k+u_{k+1}^3u_k+u_{k+1}u_k^3+u_{
k+3} u_{k+2} u_{k+1}^2\right).
\end{split}
\end{equation*}
\end{Ex}

\subsection{Periodic quantum Volterra system}

The infinite Volterra hierarchy admits a periodic reduction
$u_{n+M}=u_n$ for any integer period $M$. The periodic reduction can be
obtained by taking a quotient of the algebra $\fA$ over the ideal $\cI_M=\langle
\{u_{n+M}-u_n\}_{n\in\bbbz}\rangle$. The ideal $\cI_M$ is obviously 
$\d_{t_\ell}$--stable. We denote the quotient algebra $\fA_M=\fA\diagup\cI_M\simeq
\bbbc\langle u_1,\ldots ,u_M\rangle$. The $M$--periodic Volterra 
system and its symmetries are the sets of $M$ equations of the form (\ref{vol}) 
and (\ref{hvol}), where the index $n\in \bbbz_M =\bbbz\diagup 
M\bbbz$. 

The quantisation ideal $\fI_a$ is $\cS$--stable, that is, $\cS(\fI_a)=\fI_a$. Thus the
quantum algebra $\fA_{\fI_a}$ admits a periodic reduction for any $M>2$ and
$\fA_M^a=\fA_{\fI_a}\diagup \cI_M$ is isomorphic to $\bbbc\langle u_1,\ldots
,u_M\rangle\diagup \fI_a^M$, where
\[
  \fI_a^M=\langle u_M u_{1}-\omega u_{1}u_M,\  u_n u_{n+1}-\omega u_{n+1}u_n,\ 
\ u_n u_m-u_m u_n\ ;\ 1\leqslant n<m\leqslant M,\ 1<m-n<M-1 \rangle .
\]
In contrast to the infinite case, the algebra $\fA_M^a$ has a nontrivial center
$Z(\fA_M^a)$.
\begin{Pro}\cite{MV}\label{Za}
 If $M$ is odd then   $Z(\fA_M^a)=\bbbc[C]$, where $C=u_M
u_{M-1}\cdots u_1$.
If $M$ is even, then    $Z(\fA_M^a)=\bbbc[C_1,C_2]$ where
$C_1=u_{M-1} u_{M-3}\cdots u_1$ 
and $C_2=u_M u_{M-2}\cdots u_2$.
\end{Pro}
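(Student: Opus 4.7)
The plan is to exploit the quantum affine space structure of $\fA_M^a$. All defining relations of $\fI_a^M$ have the uniform shape $u_i u_j=\om^{\theta(i,j)}u_j u_i$, where the skew cyclic form $\theta$ is given by $\theta(i,j)=+1$ if $j\equiv i+1\pmod M$, $\theta(i,j)=-1$ if $j\equiv i-1\pmod M$, and $\theta(i,j)=0$ otherwise. A standard PBW argument (condition (ii) applied to the finite cyclic generating set) shows that the normally ordered monomials $m_a := u_M^{a_M}u_{M-1}^{a_{M-1}}\cdots u_1^{a_1}$, indexed by $a=(a_1,\ldots,a_M)\in\bbbn^M$, form a $\bbbc$-basis of $\fA_M^a$.

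The core calculation is to commute a generator $u_k$ past $m_a$. Moving $u_k$ through each factor $u_j^{a_j}$ contributes $\om^{a_j\theta(k,j)}$, so
\[
u_k\,m_a \;=\; \om^{\,a_{k+1}-a_{k-1}}\,m_a\,u_k, \qquad k\in\bbbz_M,
\]
with indices taken mod $M$. Writing a general element as $z=\sum_a c_a m_a$, and using that $m_a u_k$ is a nonzero scalar multiple of $m_{a+e_k}$ while the family $\{m_{a+e_k}\}_a$ is linearly independent, the equation $[u_k,z]=0$ collapses to $c_a(\om^{a_{k+1}-a_{k-1}}-1)=0$ for every $a$ and every $k$. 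Provided $\om=e^{\I\hbar}$ is not a root of unity, this is equivalent to $c_a=0$ whenever $a_{k-1}\neq a_{k+1}$ for some $k$.

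Next I would analyse the combinatorial system $\{a_{k-1}=a_{k+1}\}_{k\in\bbbz_M}$. It couples the components of $a$ along the orbits of the shift-by-two map on $\bbbz_M$. For $M$ odd this map acts transitively, forcing all $a_k$ to coincide and leaving a single free parameter $n\in\bbbn$. For $M$ even it decomposes into the orbit of odd indices and the orbit of even indices, giving two parameters $(n_1,n_2)$. A direct rewriting (using $u_{n+1}u_n=\om^{-1}u_n u_{n+1}$ and the cyclic identity between $u_M$ and $u_1$) identifies $C^n$ with $u_M^n u_{M-1}^n\cdots u_1^n$ up to an explicit nonzero power of $\om$ for odd $M$, and likewise $C_1^{n_1}C_2^{n_2}$ with the normally ordered monomial associated to $(n_1,n_2)$ for even $M$. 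Combined with the short direct check that $C$, respectively $C_1$ and $C_2$, commute with every $u_k$ (which follows immediately from the two commutation relations above applied at each end of the products), this proves that these elements freely generate the centre.

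The main obstacle I foresee is the genericity hypothesis on $\om$. If $\om$ is a primitive $N$-th root of unity then the powers $u_k^N$ also become central (an Azumaya-type phenomenon) and the centre grows substantially, so Proposition \ref{Za} would need to be restated. Under the physical quantisation $\om=e^{\I\hbar}$ with $\hbar$ a generic real parameter this is not an issue, but the reader should keep the assumption in mind. Apart from this point, every step above is routine bookkeeping in a quantum affine space.
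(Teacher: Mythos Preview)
Your argument is essentially the same as the paper's: both use the PBW basis of ordered monomials, compute the commutator of a monomial with a generator $u_k$ to obtain the factor $\omega^{a_{k+1}-a_{k-1}}-1$, and then solve the resulting periodic constraint $a_{k-1}=a_{k+1}$ on $\bbbz_M$. The only minor differences are that the paper short-circuits the linear-independence step by invoking homogeneity of the relations (so the centre is spanned by monomials), and it does not flag the genericity hypothesis on $\omega$ that you rightly point out.
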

\begin{proof}{\cite{MV}} We consider the monomials $u_M^{i_M}\cdots u_2^{i_2}u_1^{i_1},$
where
the powers $i_n$ are nonnegative integers, as a basis for the algebra
$\fA_M^a$. Since
the commutation relations in $\fA_M^a$ are homogeneous, the center is
generated by monomials.  A monomial $u_M^{i_M}\cdots
u_2^{i_2}u_1^{i_1}$ belongs to the center if and only if
\[
 0=[u_M^{i_M}\cdots 
u_2^{i_2}u_1^{i_1},u_n]=(\omega^{i_{n-1}}-\omega^{i_{n+1}})u_M^{i_M}\cdots 
u_n^{i_n+1} u_2^{i_2}u_1^{i_1} \quad \mbox{for all $n\in\bbbz_M$}.
\]
Therefore $i_{n+2}=i_{n}$ for $n\in\bbbz_M$, which yields the claim due to the $M$--periodicity of the indices.
\end{proof}
The central elements are first integrals
(constants of motion) of the corresponding quantum periodic Volterra system. In
the periodic case equations of the quantum Volterra hierarchy can also be 
written in Heisenberg form (\ref{qah}) with the commuting self-adjoint
Hamiltonians
(\ref{firstham}) being the finite sums
\begin{equation} \label{firsthamM}
        H_{\ell} = \sum_{\a \in \mathcal{N}^{\ell}} \sum_{k \in \bbbz_M} 
P_{\a}(\omega) \frac{\om^{\ell}-1}{\om^{\nu(\a,0)}-1} u_{\a+k},\qquad 
\ell\in\bbbn.
    \end{equation}
In contrast to the infinite dimensional case,  only
$k=\lfloor\frac{M-1}{2}\rfloor$ first Hamiltonians are algebraically independent
first integrals and the rest are polynomials in the  first
integrals $H_1,\ldots, H_k$ and central elements of the algebra  $\fA_M^a$. 
Thus, periodic reductions of the Heisenberg equations (\ref{qah}) are integrable 
quantum systems, since
\[
\# \mbox{ commuting Hamiltonians }=\frac12 (M-\# \mbox{ generators of the 
center}).
\]
The cases $M=3$ and $M=4$ are superintegrable \cite{CMW}. The system of three 
(resp. four) equations admits two (resp. three) algebraically independent 
quantum first integrals.

For example, in the case $M=3$, the center of algebra $\fA_3^a$ is $C=u_3 u_2 u_1$.
There is only one commuting Hamiltonian, namely,
$H_1=u_1+u_2+u_3$.  The Hamiltonians $H_k, \ k\geqslant 2$ are polynomials in $C$ and $H_1$:
\[
 H_2=H_1^2,\quad H_3=H_1^3+3\omega^2 C,\quad H_4=H_1^4+4\omega^2(1+\omega) C 
H_1,\ \ldots\ .
\]
In the case $M=4$, the independent first integrals are
$$C_1=u_3 u_1,\quad C_2=u_4 u_2, \quad    H_1=u_1+u_2+u_3+u_4, $$
where $C_1, C_2$ are central elements of $\fA_4^a$. 
The Hamiltonians $H_k, \ k\geqslant 2$ are polynomials in these first integrals, namely,
\begin{eqnarray*}
 &&H_2=H_1^2-2 (C_1+C_2),\quad H_3=H_1^3-3(C_1+C_2)H_1,\\
&&H_4=H_1^4-4(C_1+C_2)H_1^2+4(1+\omega^2)C_1 C_2+2 (C_1^2+C_2^2),\ \ldots\ .
\end{eqnarray*}
In the case $M=5$, there are two commuting Hamiltonians, namely, $H_1$ and 
$H_2$, and we have
\[
 H_3=\frac32 H_2 H_1-\frac12 H_1^3,\quad
H_4=H_1^2 H_2+\frac12 H_2^2 -\frac12 H_1^4,\ \ldots\ .
\]
In the case $M=6$,  the independent first integrals are $C_1, C_2$ as well as $H_1$, $H_2$,
and
\[
 H_3=\frac32 H_2 H_1-\frac12 H_1^3+3(C_1+C_2),\quad
H_4=H_1^2 H_2+\frac12 H_2^2 -\frac12 H_1^4+4(C_1+C_2) H_1,\ \ldots\ .
\]

The quantisation ideal $\fI_b$ is $\cS^2$--stable. Thus the quantum algebra
$\fA_{\fI_b}$ admits a periodic reduction for any {\em even} $M=2N>2$. The 
algebra
$\fA_M^b=\fA_{\fI_b}\diagup \cI_M$ is isomorphic to $\bbbc\langle u_1,\ldots
,u_M\rangle\diagup \fI_b^M$, where
\[
  \fI_b^M=\langle u_M u_{1}-\omega u_{1}u_M,\  u_n u_{n+1}-(-1)^n\omega 
u_{n+1}u_n,\
\ u_n u_m+u_m u_n\ ;\ 1\leqslant n<m\leqslant M,\ 1<m-n<M-1 \rangle .
\]
\begin{Pro}\label{Prop9}\cite{MV} Let $M=2N$.
\begin{enumerate}
 \item[\rm (i)] If  $N$ is odd, then $Z(\fA_M^b)=\bbbc[C_1,C_2]$ where
\[ C_1=u_{M-1}
u_{M-3}\cdots u_1,\qquad C_2=u_{M}
u_{M-2}\cdots u_2.
\]
\item[\rm (ii)] If $N$ is even, then the center of   $\fA_M^b$ is generated by the
elements
\[\hat{C}_1=u_{M-1}^2 u_{M-3}^2\cdots u_1^2,\quad  \hat{C}_2=u_{M}^2
u_{M-2}^2\cdots u_2^2,\quad
\hat{C} =u_{M}u_{M-1}\cdots u_2 u_1,\]
where the generators $\hat{C}_1, \hat{C}_2$ and $\hat{C}$ are algebraically
dependent $\hat{C}_1\hat{C}_2=\omega^{M-2}\hat{C}^2$.
\end{enumerate}
\end{Pro}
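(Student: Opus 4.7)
The plan is to mimic the argument of Proposition~\ref{Za}. Every defining relation of $\fA_M^b$ has the quasi-monomial form $u_a u_b = \lambda_{ab}\,u_b u_a$ with $\lambda_{ab} \in \{\pm 1, \pm\omega, \omega^{\pm 1}\}$, so for a standard normally ordered monomial $\mu = u_M^{i_M} u_{M-1}^{i_{M-1}} \cdots u_1^{i_1}$ and any generator $u_n$ one has $u_n\mu = \Lambda_n(\mathbf{i})\,\mu\, u_n$ for a single scalar $\Lambda_n(\mathbf{i}) = \pm\omega^{d_n}$ that depends only on the exponent vector $\mathbf{i} = (i_1, \dots, i_M)$. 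Since normally ordered monomials form a basis of $\fA_M^b$, an element is central if and only if each of its monomial summands is, and the problem reduces to determining all $\mathbf{i}$ with $\Lambda_n(\mathbf{i}) = 1$ for every $n \in \bbbz_M$.

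One computes $\Lambda_n$ by commuting $u_n$ leftwards through $\mu$. For an interior index $2 < n < M-1$, each $u_k$ with $k \notin \{n-1, n, n+1\}$ contributes $(-1)^{i_k}$ while the two neighbours contribute $((-1)^{n-1}\omega)^{i_{n-1}}$ and $((-1)^n \omega^{-1})^{i_{n+1}}$, giving
\[
\Lambda_n(\mathbf{i}) = (-1)^{S - i_n + (n-2) i_{n-1} + (n-1) i_{n+1}}\, \omega^{i_{n-1} - i_{n+1}}, \qquad S := \sum_{k=1}^{M} i_k .
\]
At the four boundary indices $n \in \{1, 2, M-1, M\}$ the formula is perturbed only by the cyclic relation $u_M u_1 = \omega u_1 u_M$, which crucially carries no sign, but the $\omega$-part of $\Lambda_n$ remains $\omega^{i_{n-1} - i_{n+1}}$ with indices read modulo $M$. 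For generic $\omega$ this forces $i_{n-1} = i_{n+1}$ for every $n$, so the odd-indexed exponents share a common value $a$ and the even-indexed ones a common value $b$; substituting this back, and using that $2n-3$ is odd, all sign conditions collapse to the single parity constraint $S \equiv a + b \pmod 2$, equivalently $(N-1)(a+b) \equiv 0 \pmod 2$.

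The two halves of the proposition now follow at once. When $N$ is odd the parity constraint is vacuous, so the central monomials are exactly $\mu_{a,b} := u_M^b u_{M-1}^a \cdots u_1^a$ for $a, b \geq 0$; since $\mu_{1,0} = C_1$ and $\mu_{0,1} = C_2$ are central in bidegrees $(1,0)$ and $(0,1)$, each product $C_1^a C_2^b$ is a nonzero scalar multiple of $\mu_{a,b}$, and the linear independence of the $\mu_{a,b}$ gives $Z(\fA_M^b) = \bbbc[C_1, C_2]$ with $C_1, C_2$ algebraically independent. When $N$ is even the constraint $a + b \equiv 0 \pmod 2$ splits the admissible monomials into two families: $(a,b) = (2a', 2b')$ yields a nonzero scalar multiple of $\hat{C}_1^{a'} \hat{C}_2^{b'}$, and $(a,b) = (2a'+1, 2b'+1)$ yields a nonzero scalar multiple of $\hat{C}_1^{a'} \hat{C}_2^{b'} \hat{C}$, so as a $\bbbc$-vector space $Z(\fA_M^b) = \bbbc[\hat{C}_1, \hat{C}_2] \oplus \hat{C} \cdot \bbbc[\hat{C}_1, \hat{C}_2]$.

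The remaining point, which is the main delicate step, is to pin down the single algebraic relation among the three generators in the even case. Both $\hat{C}_1 \hat{C}_2$ and $\hat{C}^2$ are central elements of bidegree $(2, 2)$, hence both are nonzero scalar multiples of the unique normally ordered monomial $u_M^2 u_{M-1}^2 \cdots u_1^2$; the ratio is obtained from an explicit normal-ordering computation. I would verify $\hat{C}_1 \hat{C}_2 = \omega^{M-2} \hat{C}^2$ first in the base case $M = 4$ (where a short sequence of elementary swaps yields the factor $\omega^2$) and then in general by tracking, pair by pair, the net scalar contributed by the reorderings needed to bring $\hat{C}^2$ and $\hat{C}_1 \hat{C}_2$ into normal form. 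The chief obstacle throughout is the careful bookkeeping of the asymmetric cyclic relation $u_M u_1 = \omega u_1 u_M$ at the pair $(M, 1)$, which is the only genuinely new ingredient compared with the argument for Proposition~\ref{Za}.
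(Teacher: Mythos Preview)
Your approach is correct and is exactly what the paper indicates: its own proof is a one-line reference to \cite{MV} saying the argument is similar to Proposition~\ref{Za}, so you have actually supplied more detail than the paper itself. One harmless slip to fix: with your convention $u_n\mu=\Lambda_n\,\mu u_n$, pushing $u_n$ rightward past $u_{n+1}$ contributes $((-1)^n\omega)^{i_{n+1}}$ and past $u_{n-1}$ contributes $((-1)^{n-1}\omega^{-1})^{i_{n-1}}$, so the $\omega$-exponent in your displayed $\Lambda_n$ should read $i_{n+1}-i_{n-1}$ rather than $i_{n-1}-i_{n+1}$ (this does not affect the conclusion $i_{n-1}=i_{n+1}$, and your sign bookkeeping is correct); the dependence relation $\hat C_1\hat C_2=\omega^{M-2}\hat C^2$ that you leave as a plan is indeed a routine normal-ordering computation along the lines you sketch.
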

\begin{proof} The proof is similar to the Proposition \ref{Za}. Details of the 
proof can be found in \cite{MV}
\end{proof}
\begin{Ex}
In the case $M=4$, the center of algebra $\fA_4^b$ is generated by
$\hat{C}_1=u_{3}^2  u_1^2,\  \hat{C}_2=u_{4}^2 u_2^2$ and $\hat{C} =u_{4}u_{3} 
u_2 u_1$.
The Hamiltonian of the  cubic member of the periodic Volterra
hierarchy is
 \[
 \hat{H}_2=\sum_{n=1}^4 u_k^2+(1-\omega)(u_2 u_1+u_4u_3)+(1+\omega)(u_3 
u_2+u_1u_4)=H_1^2,
 \]
where $H_1=u_1+u_2+u_3+u_4$. The elements $B_1=u_3u_1$ and $B_2=u_4u_2$ commute 
with each other, with $\hat{H}_2$ and anti-commute with $H_1$, that is,
\[
 [B_1,B_2]=0,\quad [B_1,\hat{H}_2]=0, \quad [B_2,\hat{H}_2]=0, \quad 
[B_1,H_1]_+=B_1 H_1+ H_1 B_1=0,\quad [B_2,H_1]_+=0.
\]
Moreover, the generators of the center $Z(\fA_4^b)$ can be
represented as
$$\hat{C}=\omega^{-1}B_2B_1=u_4 u_3 u_2 u_1,\quad  \hat{C}_1=-B_1^2=u_{3}^2  u_1^2,\quad  \hat{C}_2=-B_2^2=u_{4}^2 u_2^2.$$
In algebra $\fA_4^b$, we have the Hamiltonian $
\hat{H}_4=\hat{H}_2^2-2\hat{C}_1-2\hat{C}_2$.

Since the elements $B_1,B_2$ commute with the Hamiltonian $\hat{H}_2$ and are not
central, 
they can be regarded as Hamiltonians of commuting quantum  symmetries 
\[
\d_{\tau_1}( u_n)=[B_1,u_n]=2u_3u_1u_n,\qquad \d_{\tau_2}( 
u_n) =[B_2,u_n]=2u_4u_2u_n, \qquad n\in\Z_4,
\]
which is not possible in the commutative case.
\end{Ex}

The results in the above example can be generalised to the case when $M=2N$ and
$N$ is even. In $\fA_M^b$ the elements $B_1=u_{M-1}u_{M-3}\cdots u_1$ and  
$B_2=u_{M }u_{M-2}\cdots u_2$ satisfy the commutation relations
\begin{eqnarray*}
 &&[B_1,u_k]_+=[B_2,u_k]_+=0,\qquad\qquad \qquad k\in\bbbz_M,\\
&& [B_1,B_2]=[B_1,\hat{H}_{2\ell}]=[B_2,\hat{H}_{2\ell}]=0,\qquad \ell\in\bbbn,
\end{eqnarray*}
and the generators of the center $Z(\fA_M^b)$ can be represented as
\[
 \hat{C}_1=(-1)^{\frac{N(N-1)}{2}}B_1^2,\qquad \hat{C}_2=(-1)^{\frac{N(N-1)}{2}}B_2^2,\qquad \hat{C}=(-1)^{\frac{(N-1)(N-2)}{2}}\omega^{1-N}B_2B_1,
\]
where $\hat{C}_1, \hat{C}_2$ and $\hat{C}$ are same as in the second statement of Proposition \ref{Prop9}.
\section{Summary and Discussion}\label{discussion}

In this paper, we present explicit expressions for the infinite hierarchy of quantum Hamiltonians corresponding to both quantisations of the Volterra hierarchy, namely, quantisation ideals
$\fI_a$ and $\fI_b$, and show that they are self-adjoint and commute with each other. Moreover, the dynamical equations of the quantum hierarchy can be written in the Heisenberg form using these Hamiltonians. The proofs mainly rely on the explicit expressions of the Volterra hierarchy on a free associative algebra.

The quantum algebra $\fA_{\fI_a}[\om]$ can be regarded as a deformation of the 
commutative algebra $\tilde{\fA}=\fA_{\fI_a}[1]=\bbbc[u_n;n\in\bbbz]$. It is 
well known that taking the classical limit $\hbar\to 0$, and thus  
$\omega=e^{\i \hbar}\to 1$, one can equip $\tilde{\fA}$ with a Poisson algebra 
structure and turn the Heisenberg equations into the corresponding Hamiltonian 
ones \cite{dirac_book, PLV, Dirac25}. Let us denote by $\tilde{a}\in 
\tilde{\fA}$
the limit of $a\in\fA_{\fI_a}[\om]$, that is,
$\tilde{a}=\lim_{\hbar\to 0}a$. It is clear that for any $a,b\in 
\fA_{\fI_a}[\om]$ the commutator $[a,b]\in (\om-1)\fA_{\fI_a}[\om]$ and thus we 
can define the bracket
\begin{equation}\label{poisson_a}
 \{\tilde{a},\tilde{b}\}=\lim_{\hbar\to 0}\frac{1}{e^{\frac12\i 
\hbar}-e^{-\frac12\i \hbar}}[a,b].
\end{equation}
The bracket (\ref{poisson_a}) is a Poisson bracket on $\tilde{\fA}$. Indeed, it is $\bbbc$--bilinear and skew symmetric satisfying the Jacobi and Leibniz identities. Thus, we have
\begin{equation}\label{quadratic_p}
 \{  {u}_m,  {u}_n \}=(\delta_{m, n-1} -\delta_{m,n+1}) {u}_m  
{u}_n.
\end{equation}
It follows from Theorem \ref{thm1} that
\[
 \partial_{t_{\ell}}( {u}_n)=\lim_{\hbar\to 
0}\frac{1}{e^{-\frac12\i \ell\hbar}-e^{\frac12\i\ell \hbar}}[H_\ell,u_n]=\{ 
{u}_n,\tilde{H}_\ell\},
 \]
where
\[
\tilde{H}_\ell=\lim_{\hbar\to 0}\frac{1}{\ell}H_\ell= \sum_{k \in \bbbz}
 \sum_{\a \in \cN^{\ell}} \frac{ P_{\a}(1)}{ \nu(\a,0)} {u}_{\a+k}.
 \]
Hence the densities $\tilde{h}_\ell$ of the local conservation laws $\tilde 
H_\ell= \sum_{k \in \bbbz}\cS^k(\tilde{h}_\ell)$
for the Volterra hierarchy are given by
 \[\tilde{h}_\ell=\sum_{\a \in \cN^{\ell}} \frac{ P_{\a}(1)}{ \nu(\a,0)}  
{u}_{\a}=\sum_{\a \in \cN^{\ell}}\frac{1}{\kappa_{\a_0}}
\binom{\k_{\mx}+\k_{\mx-1}-1}{\k_{\mx}}...  \binom{\k_2+\k_1-1}{\k_2}
\binom{\k_1+\k_0-1}{\k_1}  {u}_{\a},
\]
where we recall that $\kappa_i=\nu(\a,i)$ is the number of $i$'s in $\a$. Indeed, we have
\begin{eqnarray*}
&&\tilde{h}_1=u, \quad \tilde{h}_2=\frac{u^2}{2}+ u_1 u, \quad \tilde{h}_3=\frac{u^3}{3}+u_1 u^2+u_1^2 u+u_2 u_1 u,\\
&&\tilde{h}_4=u_3 u_2 u_1 u+u_2^2 u_1 u+u_1^3 u
+\frac{u^4}{4}
+2 u_2 u_1^2 u+ u_2 u_1 u^2
+\frac{3}{2} u_1^2 u^2 +u_1 u^3, \ \cdots
\end{eqnarray*}
As a byproduct of our results, we obtained explicit expressions for all local classical
Hamiltonians $\tilde{h}_\ell$ of the classical  commutative Volterra hierarchy.
Traditional approaches \cite{InKa}, based on the Lax representation of the 
Volterra lattice or the transfer matrix approach, enable one to find a {\em
generating function} for the Hamiltonians, but not their explicit expressions.

The quantum algebra $\fA_{\fI_b}[\om]$ can be regarded as a deformation of the 
noncommutative algebra $\breve\fA=\fA_{\fI_b}[1]$. A classical 
Poisson algebra structure and Hamiltonian description of equations associated 
with deformations of noncommutative algebras have been recently developed in  
\cite{MV}.

In the classical theory of integrable systems with commutative variables and 
systems on free associative algebra,
there are numerous powerful tools and useful concepts, including Lax 
representations, Darboux transformations, recursion operators, and master 
symmetries \cite{cw19-2, Olver, miksok_CMP, Sok-20}. Their connections with 
the concept of quantisation ideals have not been explored yet. Developing this 
aspect of the theory will enable us to take advantage of  a wide range of 
results in integrable systems and to advance the  theory based on quantisation 
ideals.

\section*{Acknowledgments}
AVM and JPW are grateful for the partial support by the EPSRC  grant 
EP/V050451/1.
SC thanks for the National Research Foundation of Korea(NRF) grant funded by the Korea government(MSIT) (No.2020R1A5A1016126).
This article is partially based upon work from COST Action CaLISTA CA21109 supported by COST (European Cooperation in Science and Technology). www.cost.eu.

\vspace{-0.1cm}

\end{document}